\definecolor{pku-red}{RGB}{139,0,18}
\theoremstyle{plain}
\newtheorem{theorem}{Theorem}[section]
\newtheorem{lemma}[theorem]{Lemma}
\theoremstyle{definition}
\newtheorem{definition}[theorem]{Definition}
\theoremstyle{remark}
\newtheorem{remark}[theorem]{Remark}
\newcommand{\EE}{\mathbb{E}}
\newcommand{\PP}{\mathbb{P}}
\newcommand{\RR}{\mathbb{R}}
\newcommand{\NN}{\mathbb{N}}
\newcommand{\LL}{\mathbb{L}}
\newcommand{\Ff}{\mathcal{F}}
\newcommand{\Vv}{\mathcal{V}}
\newcommand{\Ss}{\mathcal{S}}
\newcommand{\Nn}{\mathcal{N}}
\newcommand{\Dd}{\mathcal{D}}
\newcommand{\Mm}{\mathcal{M}}
\newcommand{\Uu}{\mathcal{U}}
\newcommand{\Xx}{\mathcal{X}}
\newcommand{\Zz}{\mathcal{Z}}
\newcommand{\Yy}{\mathcal{Y}}
\newcommand{\Pp}{\mathcal{P}}
\newcommand{\Ll}{\mathcal{L}}
\newcommand{\Rom}[1]{{\uppercase\expandafter{\romannumeral#1}}}
\newcommand{\name}{$\mathtt{CITransNet}$}
\newcommand{\mtt}[1]{$\mathtt{#1}$}
\title{A Context-Integrated Transformer-Based Neural Network for Auction Design}
\author{
\textbf{Zhijian Duan}$^{1}$, 
\textbf{Jingwu Tang}$^{1}$,
\textbf{Yutong Yin}$^{1}$, \\
\textbf{Zhe Feng}$^{2}$,
\textbf{Xiang Yan}$^{3}$,
\textbf{Manzil Zaheer}$^{4}$,
\textbf{Xiaotie Deng}$^{1}$ 
\\
$^{1}$ Peking University, Beijing, China \\
$^{2}$ Google Research, Mountain View, US 
$^{3}$ Google DeepMind, Mountain View, US \\
$^{4}$ Shanghai Jiao Tong University, Shanghai, China \\
\texttt{\{zjduan,tangjingwu,ytyin\}@pku.edu.cn},~ \texttt{zhef@google.com},\\
\texttt{xyansjtu@163.com},~ \texttt{manzilzaheer@google.com},~
\texttt{xiaotie@pku.edu.cn}
}
\date{}
\begin{document}
\maketitle

\begin{abstract}
One of the central problems in auction design is developing an incentive-compatible mechanism that maximizes the auctioneer's expected revenue.
While theoretical approaches have encountered bottlenecks in multi-item auctions, recently, there has been much progress on finding the optimal mechanism through deep learning. 
However, these works either focus on a fixed set of bidders and items, or restrict the auction to be symmetric. 
In this work, we overcome such limitations by factoring \emph{public} contextual information of bidders and items into the auction learning framework.
We propose $\mathtt{CITransNet}$, a context-integrated transformer-based neural network for optimal auction design, which maintains permutation-equivariance over bids and contexts while being able to find asymmetric solutions. 
We show by extensive experiments that $\mathtt{CITransNet}$ can recover the known optimal solutions in single-item settings, outperform strong baselines in multi-item auctions, and generalize well to cases other than those in training.
\end{abstract}

\section{Introduction}
Auction design is a classical problem in computational economics, with many applications on sponsored search~\citep{jansen2008sponsored}, resource allocation~\citep{huang2008auction} and blockchain~\citep{galal2018verifiable}.
Designing an incentive-compatible mechanism that maximizes the auctioneer's expected revenue is one of the central topics in auction design.
The seminal work by \citet{myerson1981optimal} provides an optimal auction design for the single-item setting; however, designing a revenue-optimal auction is still not fully understood even for two bidders and two items setting after four decades~\citep{dutting2019optimal}. 

Recently, pioneered by \citet{dutting2019optimal}, there is rapid progress on finding (approximate) optimal auction through deep learning, e.g.,~\citep{shen2018automated, luong2018optimal, tacchetti2019neural, nedelec2019adversarial, shen2020reinforcement,  brero2021reinforcement, liu2021neural}.
Typically, we can formulate auction design as a constrained optimization problem and find near-optimal solutions using standard machine learning pipelines. 
However, existing methods only consider simple settings: they either focus on a fixed set of bidders and items, e.g.~\citep{dutting2019optimal, rahme2020auction} or ignore the identity of bidders and items so that the auction is restricted to be symmetric~\cite{rahme2020permutation}.
As a comparison, in practice, auctions are much more complex beyond the aforementioned simple settings.
For instance, in e-commerce advertising, there are a large number of bidders and items (i.e., ad slots) with various features~\cite{liu2021neural}, and each auction involves a different number of bidders and items. 
To handle such a practical problem, we need a new architecture that can incorporate public features and take a different number of bidders and items as inputs.

\paragraph{Main Contributions}
In this paper, 
we consider contextual auction design, in which each bidder or item is equipped with context.
In contextual auctions, the bidder-contexts and item-contexts can characterize various bidders and items to some extent, making the auctions close to those in practice.
We formulate the contextual auction design as a learning problem and extend the learning framework proposed in \citet{dutting2019optimal} to our setting. 
Furthermore, we present a sample complexity result to bound the generalization error of the learned mechanism. 

To overcome the aforementioned limitations of the previous works,
we propose \name: a \textbf{C}ontext-\textbf{I}ntegrated \textbf{Trans}former-based neural \textbf{Net}work architecture as the parameterized mechanism to be optimized.
\name~incorporates the bidding profile along with the bidder-contexts and item-contexts to develop an auction mechanism.
It is built upon the transformer architecture~\cite{vaswani2017attention}, which can capture the complex mutual influence among different bidders and items in an auction.
As a result, \name~is permutation-equivariant~\citep{rahme2020permutation} over bids and contexts, i.e., any permutation of bidders (or items) in the bidding profile and bidder-contexts (or item-contexts) would cause the same permutation of auction result (We will provide a formal definition in \cref{remark:PE}).
Moreover, in \name, the number of parameters does not depend on the auction scale (i.e., the number of bidders and items), which brings \name~the potential of generalizing to auctions with various bidders or items, which we denote as \emph{out-of-setting generalization}.

We show by extensive experiments that \name~can almost reach the same result as \citet{myerson1981optimal} in single-item auctions and can obtain better performance in complex multi-item auctions compared to those strong baseline algorithms we use.
Additionally, we also justify its out-of-setting generalization ability.
Experimental results demonstrate that, under the same contextual setting, \name~can still perform well in auctions with a different number of bidders or items than those in training.

\paragraph{Further Related Work}
\label{sec:related_work}
As discussed before, it is an intricate task to design optimal auctions for multiple bidders and multiple items.
Many previous works focus on special cases (to name a few, \citet{manelli2006bundling,pavlov2011optimal,giannakopoulos2014duality,yao2017dominant,daskalakis2017strong,haghpanah2021pure}) and the algorithmic characterization of optimal auction (e.g., \citet{chawla2010multi,cai2012algorithmic,babaioff2014simple,yao2014n,cai2017simple,hart2017approximate}).
In addition, machine learning has also been applied to find approximate solutions for multiple items settings~ \citep{balcan2008reducing,lahaie2011kernel,dutting2015payment}, and there are also many works analyzing the sample complexity of designing optimal auctions~\citep{cole2014sample, devanur2016sample,balcan2016sample,guo2019settling,gonczarowski2021sample}.
In our paper, we follow the paradigm of \textit{automated mechanism design} \citep{conitzer2002complexity,conitzer2004self,sandholm2015automated}.

\citet{dutting2019optimal} propose the first neural network framework, \mtt{RegretNet}, to automatically design optimal auctions for general multiple bidders and multiple items settings by modeling an auction as a multi-layer neural network and using standard machine learning pipelines.
\citet{feng2018deep} and \citet{golowich2018deep} modify \mtt{RegretNet} to handle different constraints and objectives.
\citet{curry2020certifying} extend \mtt{RegretNet} to be able to verify strategyproofness of the auction mechanism learned by neural network.
\mtt{ALGNet}~\cite{rahme2020auction} models the auction design problem as a two-player game through parameterizing the misreporter as well.
\mtt{PreferenceNet}~\cite{peri2021preferencenet} encodes human preference (e.g. fairness) into \mtt{RegretNet}. \citet{rahme2020permutation} propose a permutation-equivariant architecture called \mtt{EquivariantNet} to design \emph{symmetric} auctions, a special case that is anonymous (bidder-symmetric) and item-symmetric.
In contrast, we study optimal contextual auction design, and our proposed \name~is permutation-equivariant while not restricted to symmetric auctions.

Existing literatures of contextual auction mainly discuss the online setting of some \emph{known} contextual repeated auctions, e.g., \emph{posted-price auctions}~\citep{amin2014repeated,mao2018contextual,drutsa2020optimal,zhiyanov2020bisection}, in which at every round the item is priced by the seller to sell to a strategic buyer, and \emph{second price auctions}~\citep{golrezaei2021dynamic}.
As a comparison, we consider the offline setting of contextual \emph{sealed-bid auction}.
We learn the mechanism from historical data and optimize the expected revenue for the auctioneer. 
Besides, we do not assume the conditional distribution of the bidder's valuation when given both the bidder-context and item-context.

\paragraph{Organization}
This paper is organized as follows: 
In \cref{sec:setting} we introduce contextual auction design, model the problem as a learning problem and derive a sample complexity for it;
In \cref{sec:net} we present the structure of \name, along with the training and optimization procedure;
We conduct experiments in \cref{sec:experiment} and draw the conclusion in \cref{sec:conclusion}.

\section{Contextual Auction Design}
\label{sec:setting}
In this section, we set up the problem of contextual auction design.
Then, we extend the learning framework proposed by~\citet{dutting2019optimal} to our contextual setting. 

\subsection{Contextual Auction}
We consider a contextual auction with $n$ bidders $N = \{1, 2, \dots, n\}$ and $m$ items $M = \{1, 2, \dots, m\}$.
Each bidder $i \in N$ is equipped with bidder-context ${x}_i \in \Xx \subset \RR^{d_{x}}$ and each item $j \in M$ is equipped with item-context ${y}_j \in \Yy \subset \RR^{d_y}$, in which $d_x$ and $d_y$ are the dimensions of bidder-context variables and item-context variables, respectively.
Denote ${{x}} = ({x}_1, {x}_2, \dots, {x}_n)$ as the bidder-contexts and ${{y}} = ({y}_1, {y}_2, \dots, {y}_m)$ as the item-contexts.
${{x}}$ and ${{y}}$ are sampled from underlying joint probability distribution $\Dd_{{x}, {{y}}}$.
Let $v_{ij}$ be the valuation of bidder $i$ for item $j$. Conditioned on bidder-context ${x}_i$ and item-context ${y}_j$, $v_{ij}$ is sampled from a distribution $\Dd_{v_{ij}|{x}_i, {y}_j}$, i.e., the distribution of $v_{ij}$ depends on both $x_i$ and $y_j$.

The valuation profile $v = (v_{ij})_{i\in N, j\in M} \in \RR^{n\times m}$ is unknown to the auctioneer, however, she knows the sampled bidder-contexts ${{x}}$ and item-contexts ${{y}}$.
In this paper, we only focus on \emph{additive} valuation setting, i.e., the valuation of each bidder $i$ for a set of items $S\subseteq M$ is the sum of valuation for each item $j\in S$: $v_{iS} = \sum_{j \in S}v_{ij}$.
At an auction round, each bidder bids for each item.
Given the bidding profile (or bids) $b = (b_{ij})_{i \in N, j\in M}$, the contextual auction mechanism is defined as follows:

\begin{definition}[Contextual Auction Mechanism]
A contextual auction mechanism $(g, p)$ consists of an allocation rule $g$ and a payment rule $p$:
\begin{itemize}
    \item The allocation rule $g=(g_{ij})_{i\in N, j\in M}$, in which $g_{ij}\colon \RR^{n\times m} \times \Xx^n\times {\Yy}^m \rightarrow [0,1]$ computes the probability that item $j$ is allocated to bidder $i$, given the bidding profile $b\in \RR^{n\times m}$, bidder-contexts $ {x}\in \Xx^n$ and item-contexts $ {y} \in {\Yy}^m$. 
    For all $b, {{x}}, {{y}}$, and $j \in M$, we have $\sum_{i=1}^n g_{ij}(b, {{x}}, {{y}})\leq 1$ to guarantee no item is allocated more than once. 
    
    \item The payment rule $p=(p_1,p_2,\dots,p_n)$, in which $p_i\colon \mathbb{R}^{n\times m}\times \Xx^n\times {\Yy}^m \rightarrow \mathbb{R}_{\geq 0}$ computes the price bidder $i$ need to pay, given the bidding profile $b\in \RR^{n\times m}$, bidder-contexts $ {x}\in \Xx^n$ and item-contexts $ {y} \in {\Yy}^m$. 
\end{itemize}
\end{definition}

Define $\Vv = \Vv_1\times \Vv_2\times \dots \times \Vv_n$ be the joint valuation profile domain set, in which $\Vv_i$ is the domain set of all the possible valuation profiles $v_i = (v_{i1}, v_{i2}, \dots, v_{im})$ of bidder $i$.
Let $\Vv_{-i} = (\Vv_1, \dots, \Vv_{i-1}, \Vv_{i+1}, \dots, \Vv_n)$ be the joint valuation profile domain set except $\Vv_i$.
Similarly, we denote $v_{-i} = (v_1, \dots, v_{i-1}, v_{i+1}, \dots, v_n)$ and $b_{-i} = (b_1, \dots, b_{i-1}, b_{i+1}, \dots, b_n)$.
Without loss of generality, we assume $b_i \in \Vv_i$ for all $i\in N$. 
Each bidder $i\in N$ aims to maximize her utility, defined as follows,
\begin{definition}[Quasilinear utility]
In an additive valuation auction setting, the utility of bidder $i$ under mechanism $(g, p)$ is defined by
\vspace{-10pt}
\begin{equation*}
    u_i({v}_i, b,{{x}},{{y}})= \sum_{j=1}^m g_{ij}(b,{{x}},{{y}}) v_{ij} - p_i(b,{{x}},{{y}})
\end{equation*}
for all $ v_i \in \Vv_i, b\in \Vv, {x}\in \Xx^n, {y}\in {\Yy}^m$.
\end{definition}

In this work, we want the auction mechanism to be \emph{dominant strategy incentive compatible} (DSIC)\footnote{There is another weaker notion of incentive compatibility, Bayesian incentive compatibility (BIC), in the literature. In practice, DSIC is more desirable than BIC. It doesn't require prior knowledge of the other bidders and is more robust. In this work, we only focus on DSIC, similar to~\citet{dutting2019optimal}.}, defined as below,  
\begin{definition}[DSIC]
\label{def:DSIC}
An auction $(g, p)$ is \emph{dominant strategy incentive compatible} (DSIC) if for each bidder, the optimal strategy is to report her true valuation no matter how others report.
Formally, for each bidder $i\in N$, for all ${{x}} \in \Xx^n,{{y}} \in {\Yy}^m$ and for arbitrary $b_{-i}\in \Vv_{-i}$, we have
\begin{equation*}
\begin{aligned}
u_i({v}_i,({v}_i,b_{-i}),{{x}},{{y}})) \ge u_i({v}_i,({b}_i,b_{-i}),{{x}},{{y}})),
\end{aligned}
\end{equation*}
for all $b_i \in \Vv_i$.
\end{definition}

Besides, the auction mechanism needs to be \emph{individually rational} (IR), defined as follows,
\begin{definition}[IR]
\label{def:IR}
An auction $(g, p)$ is \emph{individually rational} (IR) if for each bidder, truthful bidding will receive a non-negative utility.
Formally, for each bidder $i \in N$, for all ${{x}} \in \Xx^n, {{y}} \in {\Yy}^m$ and for arbitrary ${v}_i \in \Vv_i, b_{-i} \in \Vv_{-i}$, we have
\begin{equation}\label{eq:IR_eq}\tag{IR}
  u_i({v}_i,({v}_i,b_{-i}),{{x}},{{y}})\geq 0.  
\end{equation}
\end{definition}

In a DSIC and IR auction, rational bidders would truthfully report their valuations.
Therefore, let $\Dd_{v, {{x}}, {{y}}}$ be the joint distribution of $v$, ${{x}}$ and ${{y}}$, the expected revenue is:
\begin{equation*}
\begin{aligned}
    rev := & \EE_{(v,{{x}},{{y}}) \sim \Dd_{v, {{x}},{{y}}}}
    \left[\sum_{i=1}^n p_i(v,{{x}},{{y}})\right]. 
\end{aligned}
\end{equation*}
Optimal contextual auction design aims to find an auction mechanism that maximizes the expected revenue while satisfying the DSIC and IR conditions.
  
\subsection{Contextual Auction Design as a Learning Problem}

Similar to~\citet{dutting2019optimal}, we formulize the problem of optimal auction design as a learning problem.
First, we define \textit{ex-post regret}:
\begin{definition}[(Ex-post) Regret]
The ex-post regret for a bidder $i$ under mechanism $(g, p)$ is the maximum utility gain she can achieve by misreporting when the bids of others are fixed, i.e., 
\begin{equation*}
\begin{aligned}
    rgt_{i}(v,{{x}},{{y}}) := \max_{{b}_i\in \Vv_i}& u_i({v}_i, ({b}_i,v_{-i}),{{x}},{{y}})
    -u_i({v}_i, v,{{x}},{{y}}).
\end{aligned}
\end{equation*}
\end{definition}
In particular, similar to~\citet{dutting2019optimal}, the DSIC condition is equivalent to
    $rgt_{i}(v,{{x}},{{y}}) = 0, \forall i\in N, v\in \Vv, x\in \Xx^n,y \in \Yy^m$.
By assuming that $\Dd_{v,x,y}$ has full support on the space of $(v, x, y)$ and recognizing that the regret is non-negative, an auction satisfies DSIC (except for measure zero events) if 
\begin{equation}\label{eq:regretDSIC}
\tag{DSIC} 
\EE_{(v,x,y)\sim \Dd_{v,x,y}}\left[\sum_{i=1}^nrgt_{i}(v,{{x}},{{y}})\right]=0.
\end{equation}
Let $\Mm$ be the set of all the auction mechanisms that satisfy \cref{eq:IR_eq}.
By setting~\cref{eq:regretDSIC} as a constraint, we can formalize the problem of finding an optimal contextual auction as a constraint optimization:
\begin{equation}
\tag{\Rom{1}}\label{eq:exact_prob}
\begin{aligned}
\min_{(g,p)\in\Mm}& 
-\EE_{(v,{{x}},{{y}}) \sim \Dd_{v, {{x}},{{y}}}}
\left[\sum_{i=1}^n p_i(v,{{x}},{{y}})\right]
\\
\mathrm{s.t.~} & \EE_{(v,x,y)\sim \Dd_{v,x,y}}\left[\sum_{i=1}^nrgt_{i}(v,{{x}},{{y}})\right]=0.
\end{aligned}
\end{equation}

This optimization problem is generally intractable due to the intricate constraints\footnote{In the automated mechanism design literature~\cite{conitzer2002complexity,conitzer2004self}, \cref{eq:exact_prob} can be formulated as a linear programming. However, this LP is hard to solve in practice because of the exponential number of constraints, even for discrete value distribution settings.}.
To handle such a problem, we parameterize the auction mechanism as $(g^w, p^w)$, where $w \in \RR^{d_w}$ are the parameters (with dimension $d_w$) to be optimized. 
All the expectation terms are computed empirically by $L$ samples of $(v, x, y)$ independently drawn from $\Dd_{v,{{x}},{{y}}}$.
The empirical ex-post regret for bidder $i$ under parameters $w$ is defined as
\begin{equation} 
\label{eq:emp_reg}
\begin{aligned}
\widehat{rgt}_i(w) :=& \frac{1}{L}\sum_{\ell=1}^L rgt_i^w(v^{(\ell)}, x^{(\ell)}, y^{(\ell)}), 
\end{aligned}
\end{equation}
where $rgt_i^w(v, x, y)$ is computed based on the parameterized mechanism $(g^w,p^w)$.
On top of that, the learning formulation of \cref{eq:exact_prob} is
\begin{equation}
\label{eq:empirical_prob}
\tag{\Rom{2}}
\begin{aligned}
\min_{w\in \RR^{d_w}} &-\frac{1}{L}\sum_{\ell=1}^L \sum_{i=1}^n p_i^w(v^{(\ell)},{{x}}^{(\ell)},{{y}}^{(\ell)})
\\
&\text{s.t.}\quad\widehat{rgt}_i(w)=0, \forall i \in N
\end{aligned}
\end{equation}
 
\cref{eq:IR_eq} can be satisfied through the architecture design.
See \cref{sec:net:output} for the discussion.

\subsection{Sample Complexity}
\label{subsec:theory}
We provide a sample complexity to bound the two gaps at the same time: the gap between empirical revenue and expected revenue, and the gap between empirical regret and expected regret.
Such result justifies the feasibility to approximately solve \cref{eq:exact_prob} by \cref{eq:empirical_prob}.

For contextual auction mechanism class $\Mm$, similar to \citet{dutting2019optimal}, we measure the capacity of $\Mm$ via \textit{covering numbers}~\citep{shalev2014understanding}. We define the $\ell_{\infty, 1}$-distance between two auction mechanisms $(g, p), (g', p')\in \Mm$ as $\max_{v,x,y} \sum_{i\in N,j\in M}|g_{ij}(v, x, y) - g'_{ij}(v, x, y)| + \sum_{i\in N}|p_i(v, x, y) - p'_i(v, x, y)|$.
For all $r > 0$, let $\Nn_{\infty, 1}(\Mm, r)$ be the minimum number of balls with radius $r$ that cover all the mechanisms in $\Mm$ under $\ell_{\infty, 1}$-distance (called the $r$-covering number of $\Mm$). 
We have the following result:

\begin{theorem}
\label{thm:UC}
For each bidder $i$, assume w.l.o.g.~that the valuation function $v_i$ satisfies $v_i(S) \leq 1,\, \forall S \subseteq M$.
Fix $\delta, \epsilon \in (0,1)$, for any $(g^w,  p^w) \in \Mm$, 
when 
\begin{equation*}
L \ge \frac{9n^2}{2\epsilon^2}\left(\ln\frac{4}{\delta} + \ln{ \Nn_{\infty, 1}(\Mm,\frac{\epsilon}{6n})}\right),
\end{equation*}
with probability at least $1 - \delta$ 
over draw of training set $S$ of $L$ samples from $\Dd_{v,{x},{y}}$, we have both
\begin{equation}
\label{eq:UC:profit}
\begin{aligned}
\left|\sum_{i=1}^n\Big(\EE_{(v,x,y)}p^w_i(v, {x}, {y}) - \sum_{\ell=1}^L \frac{p^w_i(v^{(\ell)}, {x}^{(\ell)}, {y}^{(\ell)})}{L}\Big)\right| \le \epsilon,
\end{aligned}
\end{equation}
and 
\begin{equation}
\label{eq:UC:regret}
\begin{aligned}
\bigg|\EE_{(v,{x},{y}) \sim \Dd_{v,{x},{y}}}\Big[\sum_{i=1}^n rgt_i^w(v,x,y)\Big]  - 
\sum_{i=1}^n \widehat{rgt}_i(w)\bigg| \le \epsilon.
\end{aligned}
\end{equation}
\end{theorem}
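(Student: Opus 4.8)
\emph{Proof plan.} This is a two-sided uniform-convergence guarantee, so the plan is the standard covering-number argument: (i) bound the range of the two functionals $(v,x,y)\mapsto\sum_{i\in N} p_i^w(v,x,y)$ and $(v,x,y)\mapsto\sum_{i\in N} rgt_i^w(v,x,y)$ over $\Mm$; (ii) show that both are Lipschitz in the mechanism with respect to the $\ell_{\infty,1}$-distance; (iii) replace $\Mm$ by a finite $\ell_{\infty,1}$-cover, apply Hoeffding's inequality on the cover, take a union bound, and transfer back by the triangle inequality. For step (i), payments are nonnegative by definition, and since every $(g^w,p^w)\in\Mm$ is individually rational (\cref{def:IR}), the condition at the truthful report yields $p_i^w(v,x,y)\le\sum_j g_{ij}^w(v,x,y)\,v_{ij}\le v_i(M)\le 1$, using the normalization $v_i(S)\le 1$; similarly $rgt_i^w(v,x,y)\ge 0$ (truthful reporting is feasible in the maximum) and $rgt_i^w(v,x,y)\le\max_{b_i}\sum_j g_{ij}^w(b_i,v_{-i},x,y)\,v_{ij}\le v_i(M)\le 1$, again using $p_i^w\ge0$ and IR at the truthful report. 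Hence $\sum_i p_i^w$ and $\sum_i rgt_i^w$ are $[0,n]$-valued, and the same holds for any mechanism in $\Mm$.

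For step (ii), suppose $(g,p),(g',p')\in\Mm$ are within $\ell_{\infty,1}$-distance $r$. Then $\bigl|\sum_i p_i(v,x,y)-\sum_i p'_i(v,x,y)\bigr|\le r$ at every $(v,x,y)$ directly. For the regret, I would write $rgt_i-rgt'_i$ as the difference of the two maximized misreport utilities plus the difference of the two truthful utilities; bounding the former by $\sup_{b_i}|u_i(v_i,(b_i,v_{-i}),x,y)-u'_i(v_i,(b_i,v_{-i}),x,y)|$ (valid because $|\max f-\max g|\le\sup|f-g|$) and using $v_{ij}\le 1$, each of the two differences is controlled, at a single input point, by the summed allocation-plus-payment discrepancy between the two mechanisms, which is at most $r$. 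Summing over bidders, the truthful terms are all evaluated at the common point $(v,x,y)$ and contribute at most $r$ in total, whereas the misreport terms are evaluated at $n$ possibly distinct points (each bidder's best deviation $(b_i^\star,v_{-i},x,y)$ may differ) and contribute at most $nr$; thus $\bigl|\sum_i rgt_i(v,x,y)-\sum_i rgt'_i(v,x,y)\bigr|\le (n+1)r$ pointwise.

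For step (iii), I would fix $r=\epsilon/(6n)$, take a minimal $r$-cover $\widehat\Mm\subseteq\Mm$ of size $\Nn_{\infty,1}(\Mm,\epsilon/(6n))$, and apply Hoeffding's inequality to each cover element: the empirical averages $\frac1L\sum_\ell\sum_i p_i(v^{(\ell)},x^{(\ell)},y^{(\ell)})$ and $\frac1L\sum_\ell\sum_i rgt_i(v^{(\ell)},x^{(\ell)},y^{(\ell)})$ are means of i.i.d.\ $[0,n]$-valued variables, so each deviates from its expectation by more than $\epsilon/3$ with probability at most $2\exp(-2L\epsilon^2/(9n^2))$. A union bound over the two functionals and the $\Nn_{\infty,1}(\Mm,\epsilon/(6n))$ cover elements bounds the total failure probability by $4\,\Nn_{\infty,1}(\Mm,\epsilon/(6n))\exp(-2L\epsilon^2/(9n^2))$, which is at most $\delta$ exactly under the stated lower bound on $L$. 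On the complementary event, for an arbitrary $(g^w,p^w)\in\Mm$ I pass to its nearest cover element and chain the bounds from steps (i)--(ii): the revenue gap is at most $2r+\epsilon/3\le\epsilon$ and the regret gap is at most $2(n+1)r+\epsilon/3\le\epsilon$ (using $n\ge1$), establishing \cref{eq:UC:profit} and \cref{eq:UC:regret} simultaneously.

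The only delicate point is the regret Lipschitz estimate in step (ii): because of the inner maximization over misreports, the $\ell_{\infty,1}$-ball must be invoked at sample- and bidder-dependent points $(b_i^\star,v_{-i},x,y)$, so one cannot charge the entire sum $\sum_i rgt_i$ to a single ball, and correctly accounting for this is exactly what yields the extra factor $n$ — hence the $n^2$ in the sample-complexity bound and the $\epsilon/(6n)$ covering radius. The rest is bookkeeping.
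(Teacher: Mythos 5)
Your proposal is correct and takes essentially the same route as the paper's proof: a minimal $\ell_{\infty,1}$-cover of $\Mm$ at radius $\epsilon/(6n)$, Hoeffding's inequality on the $[0,n]$-valued summed payment and summed regret for each cover element, a union bound, and a Lipschitz transfer back to arbitrary $(g^w,p^w)\in\Mm$. The only (immaterial) difference is bookkeeping: the paper packages the cover-plus-Hoeffding step as a separate generalization lemma applied to the induced classes $\Pp$ and $\mathrm{RGT}\circ\Uu$, bounding the regret class's covering number via a $2n$-Lipschitz reduction to $\Mm$, whereas you cover $\Mm$ once and use the slightly sharper $(n+1)$-Lipschitz bound for the summed regret.
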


See \cref{app:proof:thm:UC} for detailed proofs.

\section{Model Architecture}
\label{sec:net}

\begin{figure*}[t]
    \centering
    \includegraphics[width=\textwidth,trim=0 36mm 0 6mm,clip]{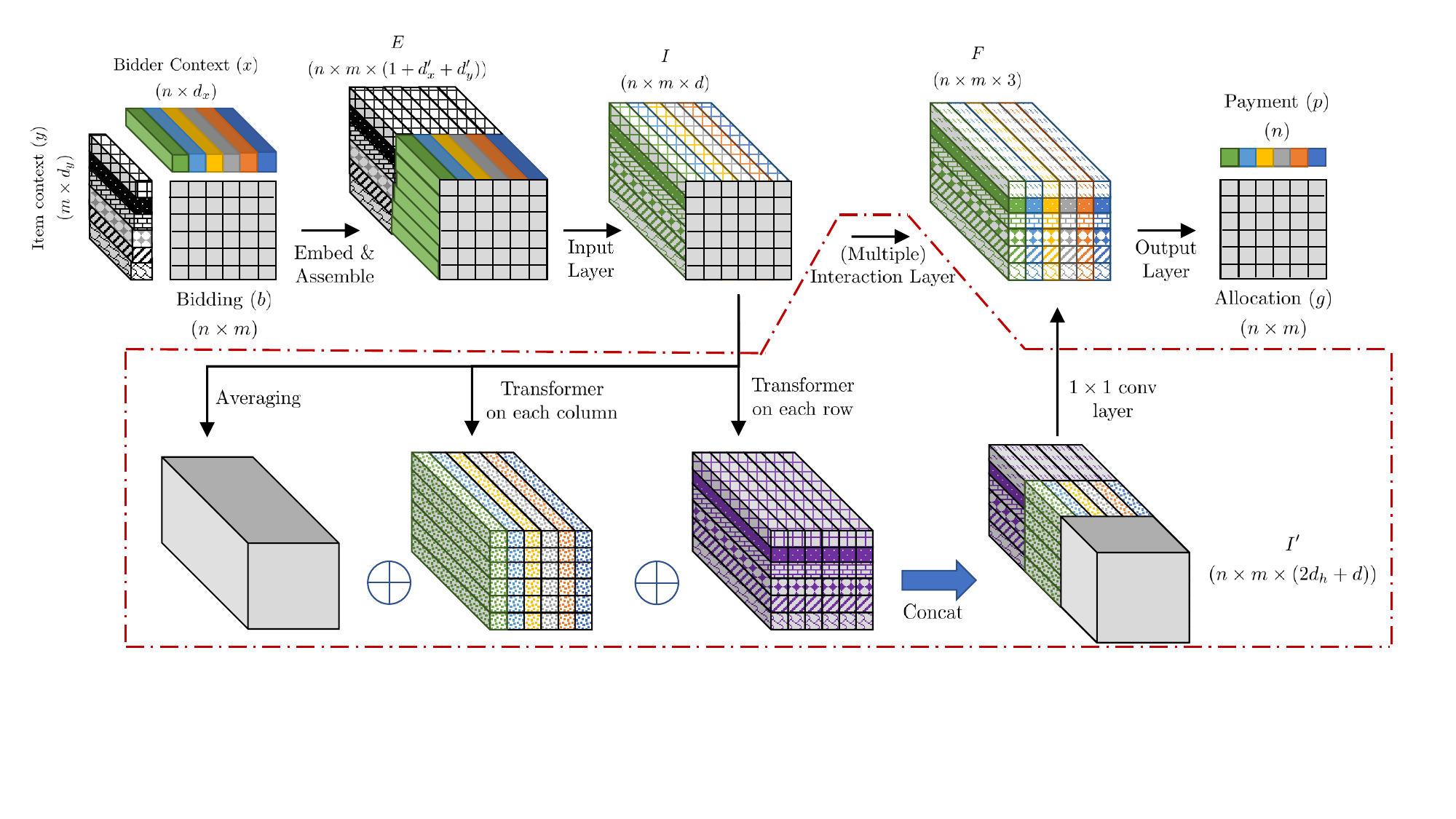}
    \caption{
    A schematic view of \name, which
    takes the bidding profile $b \in \RR^{n\times m}$, bidder-contexts ${x} \in \Xx^n$ and item-contexts ${y} \in \Yy^m$ as inputs.
    We first embeds $x$ and $y$ into $e_x \in \RR^{d'_x}$ and $f_y \in \RR^{d'_y}$, and then assemble $e_x$, $f_y$ and $b$ into $E\in \RR^{n\times m\times (1+d'_x+d'_y)}$, the initial representation for each bidder-item pair.
    The remaining part of our input layer along with one or more transformer-based interaction layers are adopted to model the mutual interactions among different bidders and items.
    Based on the output $F \in \RR^{n\times m\times 3}$ of the last interaction layer, we compute the allocation and payment result via the final output layer.
    }
    \label{fig:model}
\end{figure*}

In this section, we describe \name, the proposed context-integrated transformer-based neural network for computing allocation and payment in \cref{eq:empirical_prob}.

\subsection{Overview of \name}
As shown in \cref{fig:model}, \name~takes the bidding profile $b \in \RR^{n\times m}$, bidder-contexts ${x}$ and item-contexts ${y}$ as inputs.
An input layer is used first to compute a $d$-dimensional feature vector for each bidder-item pair.
Afterward, the features of all the bidder-item pairs, i.e., $I \in \RR^{n\times m\times d}$, are fed into one or multiple interaction layers.
Such transformer-based interaction layers model the interactions between bidders and items.
The global feature maps $F \in \RR^{n\times m\times 3}$ are obtained through the last interaction layer.
Finally, we compute the allocation result $g^w(b, {x}, {y})$ and payment result $p^w(b, {x}, {y})$ through the final output layer.

\subsection{Input Layer}

First, we apply a pre-processing to obtain a representation $e_{x_i} \in \RR^{d'_x}$ for each bidder context $x_i$ and $f_{y_j} \in \RR^{d'_y}$ for each item context $y_j$:
\begin{itemize}
    \item If $x_i$ (or $y_j$) is drawn from a continuous space, simply set $e_{x_i} = x_i$ (or $f_{y_j} = y_j$).
    \item If $x_i$ (or $y_j$) is only drawn from some finite types, embed it into a continuous space, similarly as the common procedure in word embedding~\cite{mikolov2013distributed}. The corresponding embedding is $e_{x_i}$ (or $f_{y_j}$).
\end{itemize}

We construct the initial representation for each bidder-item pair: $E = (E_{i,j})_{i\in N, j\in M}$, in which
\begin{equation*}
    E_{ij} = [b_{ij};{e}_{x_i};f_{y_j}] \in \RR^{1 + d'_x + d'_y},
\end{equation*}

Afterwards, two $1\times 1$ convolutions with a ReLU activation are applied to $E$ and reduce the third-dimension of $E$ from $1 + d_x' + d_y'$ to $d - 1$. Formally, 
\begin{equation*}
    E' = \mathrm{Conv}_2(\mathrm{ReLU}(\mathrm{Conv}_1(E))) \in \RR^{n\times m\times (d-1)},
\end{equation*}
where both $\mathrm{Conv}_1$ and $\mathrm{Conv}_2$ are $1\times1$ convolutions, and $\mathrm{ReLU}(x):=\max(x, 0)$.
By concatenating $E'$ and the bids $b$, we get $I \in \RR^{n\times m\times d}$, the output of our input layer:
\begin{equation*}
\begin{aligned}
    I &= [b; E'] \in \RR^{n\times m \times d},
\end{aligned}
\end{equation*}
where feature $ {I}_{ij} \in \RR^d$ in $I$ captures the bidding and context information of the corresponding bidder-item pair.

\subsection{Interaction Layer}
Given the representation for all bidder-item pairs $I \in \RR^{n\times m\times d}$, we move on to model the interactions between different bidders and items, which is illustrated in the lower part of \cref{fig:model}.
The interaction layer is built based upon transformer model~\cite{vaswani2017attention}, which can be used to capture the high-order feature interactions of input through the multi-head self-attention module~\cite{song2019autoint}.
See \cref{app:transformer} for a description of transformer.

Precisely, for each bidder $i$, we model its interactions with all the $m$ items through transformer on the $i$-th row of $I$ (denoted as ${I}_{i, \cdot} \in \RR^{m\times d_h}$):
\begin{equation*}
    {I}^{\mathrm{row}}_{i, \cdot} = \mathrm{transformer}({I}_{i, \cdot}) \in \RR^{m\times d_h}, \forall i \in N,
\end{equation*}
where $d_h$ is the size of the hidden nodes in the MLP part of the transformer.
Symmetrically, for each item $j$, we model its interactions with all the $n$ bidders through another transformer on the $j$-th column of $I$ (called ${I}_{\cdot, j} \in \RR^{n\times d_h}$):
\begin{equation*}
    {I}^{\mathrm{column}}_{\cdot, j} = \mathrm{transformer}({I}_{\cdot, j}) \in \RR^{n\times d_h}, \forall j \in M.
\end{equation*}
Afterward, the global representation for all the bidder-item pairs is obtained by the average of all the features
\begin{equation*}
     e^{\mathrm{global}} = \frac{1}{nm}\sum_{i=1}^n\sum_{j=1}^m {I}_{ij} \in \RR^d.
\end{equation*}

Combining $I^{\mathrm{row}}, I^{\mathrm{column}}$ and $ e^{\mathrm{global}}$ together, we get new features $I'_{ij}$ for each bidder-item pair
\begin{equation*}
    {I}'_{ij} := [{I}^{\mathrm{row}}_{ij}; {I}^{\mathrm{column}}_{ij};  e^{\mathrm{global}}] \in \RR^{2d_h + d}
\end{equation*}
Finally, as what we did in input layer, 
two $1\times 1$ convolutions with a ReLU activation are applied on $I'$ in order to reduce the third dimension of $I'$ 
from $2d_h + d$ to $d_{\mathrm{out}}$.
Formally,
\begin{equation*}
    F = \mathrm{Conv}_4(\mathrm{ReLU}(\mathrm{Conv}_3(I'))) \in \RR^{n\times m\times d_{\mathrm{out}}},
\end{equation*}
where both $\mathrm{Conv}_4$ and $\mathrm{Conv}_3$ are $1\times 1$ convolutions, and $F$ is the output of the interaction layer.
By stacking multiple interaction layers, we can model higher-order interactions among all the bidders and items.

\subsection{Output Layer} 
\label{sec:net:output}
In the last interaction layer, we set $d_{\mathrm{out}}=3$ and get the global feature maps $F = (F^h, F^q, F^p) \in \RR^{n\times m \times 3}$, which will be used to compute the final allocation and payment in the output layer.

The first feature map $F^h \in \RR^{n\times m}$ is used to compute the original allocation probability $h^w(b,{x},{y}) \in [0, 1]^{n\times m}$ by softmax activation function on each column of $F^h$, i.e., 
\begin{equation*}
    h^w_{\cdot,j} = \mathrm{Softmax}(F^h_{\cdot, j}), \forall j \in M.
\end{equation*}
Here $h^w_{i,j}$ is the probability that item $j$ is allocated to bidder $i$ and we have $\sum_{i=1}^nh^w_{i,j} = 1$ for each item $j \in M$.

Since some item $j$ may not be allocated to any bidder, we use the second feature map $F^q$ to adjust $h_w$.
The weight $q^w(b,{x},{y}) \in (0,1)^{n\times m}$ of each probability is computed through sigmoid activation on $F^q$:
\begin{equation*}
    q^w_{i,j} = \mathrm{Sigmoid}(F^q_{i,j}), \forall i\in N, \forall j \in M,
\end{equation*}
where $\mathrm{Sigmoid}(x):= \frac{1}{1+e^{-x}} \in (0, 1)$.

The allocation result $g^w$ is then obtained by combining $h^w$ and $q^w$ together:
\begin{equation*}
g^w_{ij}(b,{x},{y}) = q^w_{ij}(b,{x},{y})h^w_{ij}(b,{x},{y}).    
\end{equation*}
As a result, we have $0 < \sum_{i=1}^ng^w_{i,j}(b,{x},{y}) < 1$ for each item $j \in M$.

For payment, we compute payment fraction $\tilde p^w(b,{x},{y}) \in (0, 1)^n$ via the third feature map $F^p$:
\begin{equation*}
    \tilde p^w_i = \mathrm{Sigmoid}\big(\frac{1}{m}\sum_{j=1}^m F^p_{ij}\big), \forall i\in N,
\end{equation*}
where $\tilde p^w_i$ is the fraction of bidder $i$'s utility that she has to pay to the auctioneer. 
Given the allocation $g^w$ and payment fraction $\tilde{p}^w$, the payment for bidder $i$ is 
\begin{equation*}
    p^w_i(b,{x},{y}) = \tilde p^w_i(b,{x},{y})\sum_{j=1}^mg^w_{ij}(b,{x},{y})b_{ij}.
\end{equation*}
By doing so, Equation \eqref{eq:IR_eq} is satisfied.

\begin{remark}[Permutation-equivariant]
\label{remark:PE}
Similar to the definition in \citet{rahme2020permutation}, we say an auction mechanism $(g^w, p^w)$ is permutation-equivariant if for any two permutation matrices $\Pi_{n}\in \{0,1\}^{n\times n}$ and $\Pi_{m}\in \{0,1\}^{m\times m}$, and any input (including bids $b \in \mathbb{R}^{n\times m}$, bidder-contexts $x \in \mathbb{R}^{n\times d_x}$ and item-contexts $y \in \mathbb{R}^{m\times d_y}$), we have $g^w(\Pi_{n}b\Pi_{m}, \Pi_n x, \Pi_m^T y)=\Pi_{n}g^w(b,x,y)\Pi_{m}$ and $p^w(\Pi_{n}b\Pi_{m}, \Pi_n x, \Pi_m^T y)=\Pi_{n}p^w(b,x,y)$.
Transformer is known to be permutation-equivariant, since it maps each embedding in input to a new embedding that incorporates the information of the \emph{set} of all the input embeddings.
Moreover, the $1\times 1$ convolutions we use in \name~are all per bidder-item wise, i.e., acting on each bidder-item pair.
As a result, \name~maintains permutation-equivariant.
\end{remark}
\subsection{Optimization and training}
Similar to~\citet{dutting2019optimal}, \name~is optimized through the augmented Lagrangian method.
The Lagrangian with a quadratic penalty is:
\begin{equation}
\label{eq:loss}
\begin{aligned}
    \mathcal{L}_{\rho}(w;\lambda) = &-\frac{1}{L}\sum_{\ell=1}^L \sum_{i=1}^n p_i^w(v^{(\ell)}, {x}^{(\ell)}, {y}^{(\ell)})~ +
    \\
    &\sum_{i=1}^n\lambda_i\widehat{rgt}_i(w)+\frac{\rho}{2}\sum_{i=1}^n\left(\widehat{rgt}_i(w)\right)^2, 
\end{aligned}
\end{equation}

where $\lambda = (\lambda_1, \lambda_2, \dots, \lambda_n) \in\mathbb{R}^n$ is the Lagrange multipliers, and $\rho>0$ is a hyperparameter that controls the weight of the quadratic penalty. 
During optimization, we update the model parameters and Lagrange multipliers in turn, i.e., we alternately find $w^{new}\in\arg\min_w \mathcal{L}_{\rho}(w^{old},\lambda^{old})$ and update $\lambda_i^{new}=\lambda_i^{old}+\rho\cdot \widehat{rgt}_i(w^{new}), \forall i \in N$. 
See \cref{app:algorithm} for a detailed optimization and training procedure.

\section{Experiments}
\label{sec:experiment}
In this section, we conduct empirical experiments to show the effectiveness of \name~in different contextual auctions
\footnote{
Our implementation is available at \url{https://github.com/zjduan/CITransNet}.
}.
Afterward, we demonstrate the out-of-setting generalization ability for \name~by evaluating the trained model in settings with different numbers of bidders or items. 
Our experiments are run on a Linux machine with NVIDIA Graphics Processing Unit (GPU) cores.
Each result is obtained by averaging across $5$ different runs. 
We ignore the standard deviation since it is small in all the experiments.

\paragraph{Baseline Methods} We compare \name~with the following baselines: 
1)
\mtt{Item\text{-}wise\;Myerson}, a strong baseline used in~\citet{dutting2019optimal}, which independently applies Myerson auction with respect to each item
\footnote{
\mtt{Bundle~Myerson} is another baseline used in~\citet{dutting2019optimal} that satisfies both DSIC and IR. 
However, we find it always performs worse than \mtt{Item\text{-}wise\;Myerson}, both in our experiments and in~\citet{dutting2019optimal}.
Therefore, we do not present its results.
};
2) \mtt{RegretNet}~\citep{dutting2019optimal}, which adopts fully-connected neural networks to compute auction mechanism;
\mtt{EquivariantNet}~\citep{rahme2020permutation}, which is a permutation-equivariant architecture to design the special mechanism of symmetric auctions
\footnote{
While \citet{rahme2020auction} formulate auction learning as an adversarial learning framework, we view this as an orthogonal problem since this work mainly focuses on the innovation of neural architectures. Therefore, to make a fair comparison, we adopt the learning framework in \citet{dutting2019optimal} for the baselines and leave the adversarial learning framework extension for future work.
};
3) \mtt{CIRegretNet} and \mtt{CIEquivariantNet}, which are the context-integrated version of  \mtt{RegretNet} and \mtt{EquivariantNet}.
Specifically, we replace the interaction layers of our \name~with \mtt{RegretNet} and \mtt{EquivariantNet}, respectively.
We set these baselines to evaluate the effectiveness of our transformer-based interaction layers.

See \cref{app:implement} for implementation details of all methods.

\paragraph{Evaluation} 
Following \citet{dutting2019optimal} and \citet{rahme2020permutation}, to evaluate each method, we adopt empirical revenue (the minus objective in \cref{eq:empirical_prob}) and empirical ex-post regret average across all the bidders $\widehat{rgt}:=\frac{1}{n}\sum_{i=1}^n\widehat{rgt_i}$. 
We obtain the empirical regret for each bidder by executing gradient ascent on her bids $b_i$ for $200$ iterations.
We run such gradient ascent for $100$ times with different initial bids $b_i^{(0)}$, and the maximum regret is recorded for bidder $i$.

\begin{table*}[t]
    \centering
    \caption{Experiment results of known settings (Setting \ref{setting:3(5)x1}-\ref{setting:5x1_10d}).
    The optimal solutions are given by~\citet{myerson1981optimal}.
    Each experiment is run $5$ times and the average results are presented.}
    \begin{tabularx}{0.8\textwidth}{lcccccc}
        \toprule
        \multirow{3}{*}{Method} 
        & \multicolumn{2}{c}{\ref{setting:3(5)x1}: $3\times 1$} 
        & \multicolumn{2}{c}{\ref{setting:3(5)x1(2)}: $3\times 1$} 
        & \multicolumn{2}{c}{\ref{setting:5x1_10d}: $5 \times 1$}
        \\
        & \multicolumn{2}{c}{$|\Xx|=5,|\Yy|=1$} 
        & \multicolumn{2}{c}{$|\Xx|=5,|\Yy|=2$} 
        & \multicolumn{2}{c}{$\Xx, \Yy \subset \RR^{10}$}
        \\
        & $rev$ & $rgt$ & $rev$ & $rgt$ & $rev$ & $rgt$ \\
        \midrule \midrule
        $\mathtt{Optimal}$ & 0.594 & - & 0.456 & - & 0.367 & -\\ 
        \midrule
        $\mathtt{RegretNet}$ & 0.516 & $<$0.001 & 0.412 & $<$0.001 & 0.329 & $<$0.001\\
        $\mathtt{EquivariantNet}$ & 0.498& $<$0.001 & 0.403 & $<$0.001 & 0.311& $<$0.001\\
        \midrule
        \mtt{CIRegretNet} & 0.594 & $<$0.001 & 0.453 & $<$0.001 & 0.364 & $<$0.001 \\
        \mtt{CIEquivariantNet}& 0.590& $<$0.001 & 0.452& $<$0.001 & 0.360& $<$0.001
        \\
        \midrule
        \name 
        & 0.593 & $<$0.001 
        & 0.454 & $<$0.001 
        & 0.366 & $<$0.001
        \\
        \bottomrule
    \end{tabularx}
    \label{tab:single_setting}
\end{table*}

\paragraph{Single-item Contextual Auctions}
First, we evaluate \name~in single-item auctions, whose optimal solutions are given by \citet{myerson1981optimal}. We aim to justify whether \name~can recover the near-optimal solutions.
The specific single-item auctions we consider are:
\begin{enumerate} [label=(\Alph*),ref=\Alph*]
    \item \label{setting:3(5)x1} 
    $3$ bidders and $1$ item, with discrete bidder-contexts and item-context, in which $\Xx=\{1, 2, 3, 4, 5\}$ and $\Yy=\{1\}$.
    Both contexts are independently and uniformly sampled.
    Given $x_i \in \Xx$ and $y_1=1$, $v_{i1}$ is drawn according to the truncated normal distribution $\Nn(\frac{x_i}{6}, 0.1)$ in $[0, 1]$.
    
    \item\label{setting:3(5)x1(2)} 
    $3$ bidders and $1$ item, with discrete bidder-contexts and item-context, in which $\Xx=\{1, 2, 3, 4, 5\}$ and $\Yy=\{1, 2\}$.
    Both contexts are independently and uniformly sampled.
    Given $x_i \in \Xx$, $v_{i1}$ is drawn according to the truncated normal distribution $\Nn(\frac{x_i}{6}, 0.1)$ in $[0, 1]$ when $y_1 = 1$, and is drawn according to probability densities $f_i(x) = \frac{i}{6}e^{-\frac{i}{6}x}$ truncated in $[0, 1]$ when $y_1 = 2$.
    
    \item\label{setting:5x1_10d} $5$ bidders and $1$ item, with continuous bidder-contexts and item-context, in which $\Xx = [-1, 1]^{10}$ and $\Yy = [-1, 1]^{10}$. 
    Both the contexts are independently and uniformly sampled.
    Given $x_i \in \Xx$ and $y_j \in 
    \Yy$, $v_{ij}$ is drawn according to $U[0, \mathrm{Sigmoid}(x_i^Ty_j)]$.
\end{enumerate}

We present the experimental results of Setting \ref{setting:3(5)x1}, \ref{setting:3(5)x1(2)} and \ref{setting:5x1_10d} in \cref{tab:single_setting}. 
We can see that all the context-integrated models (\mtt{CIRegretNet, CIEquivariantNet} and \name) are able to recover the optimal solutions given by~\citet{myerson1981optimal} in these simple settings: near-optimal revenues are achieved with regrets less than $0.001$.
In comparison, despite low regret, \mtt{RegretNet} and \mtt{EquivariantNet} fail to reach the optimal solution. 
It turns out that integrating context information into model architecture is crucial in contextual auction design.
Furthermore, \mtt{EquivariantNet}, the symmetric mechanism designer, fails to reach the same performance as \mtt{RegretNet}, which reflects the importance of designing asymmetric solutions in contextual auctions.

\begin{table*}[t]
    \centering
    \caption{Experiment results for Setting \ref{setting:2(10)x5(10)}-\ref{setting:5x10_10d}. Each experiment is run by $5$ times and the 
    average results are presented.}
    
    \resizebox{\textwidth}{!}{
    \begin{tabularx}{1.3\textwidth}{lcccccccccccc}
        \toprule
        
        \multirow{3}{*}{Method} 
        & \multicolumn{2}{c}{ \ref{setting:2(10)x5(10)}: $2\times 5$} 
        & \multicolumn{2}{c}{ 
        \ref{setting:3(10)x10(10)}: $3\times 10$}
        & \multicolumn{2}{c}{ \ref{setting:5(10)x10(10)}: $5 \times 10$}
        & \multicolumn{2}{c}{ \ref{setting:2x5_10d}: $2\times 5$} 
        & \multicolumn{2}{c}{ 
        \ref{setting:3x10_10d}: $3\times 10$}
        & \multicolumn{2}{c}{ \ref{setting:5x10_10d}: $5 \times 10$}
        \\
        & \multicolumn{2}{c}{ $|\Xx|=|\Yy|=10$} 
        & \multicolumn{2}{c}{$|\Xx|=|\Yy|=10$} 
        & \multicolumn{2}{c}{$|\Xx|=|\Yy|=10$}
        & \multicolumn{2}{c}{ $\Xx,\Yy \subset \RR^{10}$} 
        & \multicolumn{2}{c}{ 
        $\Xx,\Yy \subset \RR^{10}$} 
        & \multicolumn{2}{c}{ $\Xx,\Yy \subset \RR^{10}$}
        \\
        & $rev$ & $rgt$ & $rev$ & $rgt$ & $rev$ & $rgt$ & $rev$ & $rgt$ & $rev$ & $rgt$ & $rev$ & $rgt$\\
        \midrule \midrule
        \mtt{Item\text{-}wise\;Myerson} & 2.821 & - & 6.509 & - & 7.376 & - & 1.071 & - & 2.793 & - & 3.684 & - \\
        \midrule
        \mtt{CIRegretNet} & 2.803 & $<$0.001 & 5.846 & $<$0.001 & 6.339 & $<$0.003 & 1.104 & $<$0.001 & 2.424 & $<$0.001 & 2.999 & $<$0.001 \\
        \mtt{CIEquivariantNet} & 2.841& $<$0.001 & 6.703& $<$0.001&7.602 & $<$0.003 &1.147 & $<$0.001& 2.872& $<$0.001&3.806 & $<$0.001 \\
        \midrule
        \name 
        & \textbf{2.916} & $<$0.001 
        & \textbf{6.872} & $<$0.001 
        & \textbf{7.778} & $<$0.003
        & \textbf{1.177} & $<$0.001 
        & \textbf{2.918} & $<$0.001 
        & \textbf{3.899} & $<$0.001 
        \\
        \bottomrule 
    \end{tabularx}
    }
    
    \label{tab:complex_setting}
\end{table*}
\paragraph{Multi-item Contextual Auctions}
Next, we illustrate the potential of \name~to discover new auction designs in multi-item contextual auctions without known solutions.
We consider discrete context settings as follows:
\begin{enumerate}
[label=(\Alph*),start=4,ref=\Alph*]
    \item\label{setting:2(10)x5(10)} 
    $2$ bidders with $\Xx=\{1, 2, \dots, 10\}$ and $5$ items with $\Yy=\{1, 2, \dots, 10\}$.
    All the contexts are uniform sampled, and $v_{ij}$ is drawn according to the normal distribution $\Nn\left(\frac{(x_i+y_j)\bmod 10 + 1}{11}, 0.05\right)$ truncated in $[0, 1]$.
    \item
    \label{setting:3(10)x10(10)} $3$ bidders and $10$ items. The discrete contexts and corresponding values are drawn similarly as Setting \ref{setting:2(10)x5(10)}.
    \item
    \label{setting:5(10)x10(10)} $5$ bidders and $10$ items, which is, to the best of our knowledge, the largest auction size considered in previous literatures of deep learning based auction design~\citep{rahme2020auction}. The discrete contexts and corresponding values are drawn similarly as Setting \ref{setting:2(10)x5(10)}.
\end{enumerate}
Additionally, We also construct continuous context settings based on Setting \ref{setting:5x1_10d}:
\begin{enumerate}
[label=(\Alph*),start=7,ref=\Alph*]
\item\label{setting:2x5_10d} 
$2$ bidders and $5$ items. The continuous contexts and corresponding values are drawn similarly as Setting \ref{setting:5x1_10d}.

\item\label{setting:3x10_10d} 
$3$ bidders and $10$ items. The continuous contexts and corresponding values are drawn similarly as Setting \ref{setting:5x1_10d}.

\item\label{setting:5x10_10d} 
$5$ bidders and $10$ items. The continuous contexts and corresponding values are drawn similarly as Setting \ref{setting:5x1_10d}.
\end{enumerate}
Experimental results for Setting \ref{setting:2(10)x5(10)}-\ref{setting:5x10_10d} are shown in \cref{tab:complex_setting}.
\name~obtains the best revenue results in all the settings while keeping low regret (less than $0.003$ in Setting \ref{setting:5(10)x10(10)} and less than $0.001$ in all the other settings).
Notice that the only difference between \name, \mtt{CIRegretNet} and \mtt{CIEquivariantNet} is the architecture of interaction layers.
Such a result indicates the effectiveness of our transformer-based interaction module to capture the complex mutual influence among bidders and items. 
Furthermore, both \name~and~\mtt{CIEquivariantNet} outperform \mtt{CIRegretNet} a lot in all the $3\times 10$ and $5\times 10$ auctions, showing that adding the inductive bias of permutation-equivariance is helpful in large-scale auction design.

\pgfplotsset{width=1.09\textwidth,compat=newest}
\begin{figure*}[t]
\centering 

\begin{subfigure}[b]{0.33\textwidth}
\centering
\begin{tikzpicture}[font=\footnotesize] 
\begin{axis}[
    xlabel=Number of Bidders, 
    ylabel=Revenue, 
    tick align=inside, 
    xtick={3,4,5,6,7},
    legend pos=south east,
    ymajorgrids=true,
    xmajorgrids=true,
    grid style=dashed,
]
    \addplot[smooth,mark=square*,blue] plot coordinates { 
        (3,6.872)
        (4,7.222)
        (5,7.395)
        (6,7.496)
        (7,7.598)
    };
    \addlegendentry{\name}
    \addplot[smooth,mark=triangle*,red] plot coordinates {
        (3,6.509)
        (4,7.028)
        (5,7.376)
        (6,7.629)
        (7,7.837)
    };
    \addlegendentry{\mtt{Baseline}}

\end{axis}
\end{tikzpicture}
\subcaption{
}
\label{subfig:oos_1}
\end{subfigure}%
\begin{subfigure}[b]{0.33\textwidth}
\centering
\begin{tikzpicture}[font=\footnotesize] 
\begin{axis}[
    xlabel=Number of Items, 
    ylabel=Revenue, 
    tick align=inside, 
    xtick={3,4,5,6,7},
    legend pos=south east,
    ymajorgrids=true,
    xmajorgrids=true,
    grid style=dashed,
]
    \addplot[smooth,mark=square*,blue] plot coordinates { 
        (3,1.720)
        (4,2.333)
        (5,2.916)
        (6,3.540)
        (7,4.141)
    };

    \addlegendentry{\name}
    \addplot[smooth,mark=triangle*,red] plot coordinates {
        (3,1.691)
        (4,2.264)
        (5,2.821)
        (6,3.391)
        (7,3.954)
    };
    \addlegendentry{\mtt{Baseline}}
\end{axis}
\end{tikzpicture}
\subcaption{
}
\label{subfig:oos_2}
\end{subfigure}%
\begin{subfigure}[b]{0.33\textwidth}
\centering
\begin{tikzpicture}[font=\footnotesize] 
\begin{axis}[
    xlabel=Number of Items, 
    ylabel=Revenue, 
    tick align=inside, 
    xtick={3,4,5,6,7},
    legend pos=south east,
    ymajorgrids=true,
    xmajorgrids=true,
    grid style=dashed,
]
    \addplot[smooth,mark=square*,blue] plot coordinates { 
        (3,0.677)
        (4,0.919)
        (5,1.177)
        (6,1.438)
        (7,1.686)
    };
    \addlegendentry{\name}
    \addplot[smooth,mark=triangle*,red] plot coordinates {
        (3,0.640)
        (4,0.855)
        (5,1.071)
        (6,1.290)
        (7,1.489)
    };
    \addlegendentry{\mtt{Baseline}}
\end{axis}
\end{tikzpicture}
\subcaption{
}
\label{subfig:oos_3}
\end{subfigure}
\caption{
Out-of-setting generalization results: we train {\name}  and evaluate it on the same contextual auction with a different number of bidders or items.
We set \mtt{Item\text{-}Wise~Myerson} as the baseline.
The regret results are less than $0.001$ in all of these experiments.
(a) Trained on Setting  \ref{setting:3(10)x10(10)} ($3\times 10$ with $|\Xx|=|\Yy|=10$) and evaluated with different number of bidders. 
(b) Trained on Setting \ref{setting:2(10)x5(10)} ($2\times 5$ with $|\Xx|=|\Yy|=10$) and evaluated with different number of items.
(c) Trained on Setting \ref{setting:2x5_10d} ($2\times 5$ with $\Xx,\Yy \subset \RR^{10}$) and evaluated with different number of items.
}
\label{fig:oos}
\end{figure*}
\paragraph{Out-of-setting Generalization}
In addition, to show the effectiveness of \name, we also conduct out-of-setting generalization experiments.
Specifically, we train our model and evaluate it in auctions with a different number of bidders or items. 
Such evaluation is feasible for \name, since the size of parameters in \name~does not rely on the number of bidders and items. 
We illustrate the experimental results on \cref{fig:oos}, and see \cref{app:exp:oos} for more detailed numerical values. 
\cref{subfig:oos_1} shows the experimental results of generalizing to a varying number of bidders.
We train \name~on Setting \ref{setting:3(10)x10(10)}, the discrete context settings with $3$ bidders and $10$ items, and we evaluate \name~on the same contextual auction with $n$ bidders and $10$ items ($n \in \{3,4,5,6,7\}$).
We observe good generalization results: In addition to obtain low regret (less than $0.001$) in all the test settings, \name~outperforms \mtt{Item\text{-}wise~Myerson} when $n\in\{3,4,5\}$
\footnote{As comparison, we find \mtt{CIEquivariantNet} fails to generalize to different bidders. See \cref{app:exp:oos} for the results.}.
Furthermore, in \cref{subfig:oos_2} and \cref{subfig:oos_3} we present the experimental results of generalizing to varying number of items.
We train \name~on Setting \ref{setting:2(10)x5(10)} and Setting \ref{setting:2x5_10d} respectively, where both settings have $2$ bidders and $5$ items , and we test the model on the same contextual auction with $2$ bidders and $m$ items ($m \in \{3,4,5,6,7\}$).
Again, we observe good generalization results.
While still keeping small regret (less than $0.001$), \name~is able to outperform \mtt{Item\text{-}wise~Myerson} in all the test auctions.

\section{Conclusion}
\label{sec:conclusion}
In this paper, we propose a new (transformer-based) neural architecture, $\mathtt{CITransNet}$, for contextual auction design. \name~is permutation-equivariant with respect to bids and contexts, and it can handle asymmetric information in auctions.
We show by experiments that $\mathtt{CITransNet}$ can recover the known optimal analytical solutions in simple auctions, and we demonstrate the effectiveness of the transformer-based interaction layers in \name~by comparing \name~with the context integrated version of \mtt{RegretNet} and \mtt{EquivariantNet}.
Furthermore, we also illustrate the out-of-setting generalization ability for \name~by evaluating it in auctions with a varying number of bidders or items. 
Given the decent generalizability of \name, an immediate next step is to test \name~over an industry-scale dataset. 
It would also be interesting to test \name~in an online manner.

\section*{Acknowledgements}

This work is supported by Science and Technology Innovation
2030 - ``New Generation Artificial Intelligence'' Major Project (No.
2018AAA0100901).
We thank Aranyak Mehta and Di Wang for an insightful discussion on the initial version of this paper. We thank all anonymous reviewers for their helpful feedback.

\bibliographystyle{plainnat}
\bibliography{reference}

\clearpage
\onecolumn
\appendix

\section{Transformer Architecture}
\label{app:transformer}
Transformer architecture~\cite{vaswani2017attention}  aims at modeling the mutual correlations among a set of tokens (e.g., words in a sentence in machine translation) via multi-head self-attention module. 
In our paper, we use transformer to model the interactions among the items (or bidders) with respect to a fixed bidder (or item).
Without loss of generality, we denote the input as
\begin{equation*}
    E_{\mathrm{input}} = (e_1, e_2, \dots, e_n)^T \in \RR^{n\times d},
\end{equation*} 
where $n$ is the number of tokens (i.e., bidders or items) and $d$ is the dimension for each feature vector $e_i$.

Let $d_h$ be the hidden dimension of transformer, and $H$ be the number of heads (i.e., subspace).
For head $h \in [H]$, we use the key-value attention mechanism~\cite{miller2016key} to determine which feature combinations are meaningful in the corresponding subspace. 
Specifically, for each token $i \in [n]$, we first compute the correlation between token $i$ and token $j$ in head $h$:
\begin{equation*}
    \alpha_{i,j}^{(h)} = \frac{\exp(\psi^{(h)}({e_i}, {e_j}))}{\sum_{k=1}^{n}\exp(\psi^{(h)}({e_i}, {e_k}))},
\end{equation*}
where
\begin{equation*}
    \psi^{(h)}({e_i}, {e_j}) = \langle {W^{(h)}_{\mathrm{query}}}{e_i}, {W^{(h)}_{\mathrm{key}}}{e_j}  \rangle,
\end{equation*}
is an attention function which defines the similarity between the token $i$ and $j$ under head $h$.
$\langle \cdot, \cdot \rangle$ is inner product, and ${W^{(h)}_\mathrm{query}}$, ${W^{(h)}_\mathrm{key}}\in \mathbb{R}^{d'\times d}$ are transformation matrices which map the original embedding space $\mathbb{R}^{d}$ into a $d' = \frac{d_h}{H}$ dimensional space $\mathbb{R}^{d'}$. 

Next, we update the representation of token $i$ in subspace $h$ by combining all relevant features. This is done by computing the weighted sum using coefficients ${\alpha_{i,j}^{(h)}}$:
\begin{equation*}
    {\widetilde{e}_i^{(h)}} =\sum_{j=1}^{n}{\alpha_{i,j}^{(h)}}({W_\mathrm{value}^{(h)}}{e_j}) \in \RR^{d'},
\end{equation*}
where ${W_\mathrm{value}^{(h)}} \in \mathbb{R}^{d'\times d}$.
Since ${\widetilde{e}_i^{(h)}}\in \mathbb{R}^{d'}$ is a combination of token $i$ and all its relevant tokens, it represents a new combinatorial feature. 

Afterwards, we collect combinatorial features learned in all subspaces as follows:
\begin{equation*}
    {\widetilde{e}_i} = {\widetilde{e}_i^{(1)}} \oplus{\widetilde{e}_i^{(2)}}\oplus\dots\oplus{\widetilde{e}_i^{(H)}} \in \RR^{Hd'} = \RR^{d_h},
\end{equation*}
where $\oplus$ is the concatenation operator, and $H$ is the number of total heads. 

Finally, a token-wise MLP is applied to each token $i$ and we get a new representation for it. 
\begin{equation*}
    e_i' = \mathrm{MLP}(\widetilde{e}_i) \in \RR^{d_h},
\end{equation*}
and the final output is 
\begin{equation*}
    E_{\mathrm{output}} = (e_1', e_2', \dots, e_n')^T \in \RR^{n\times d_h}.
\end{equation*}

Notice that the parameters to be optimized in transformer are $W^{(h)}_{\mathrm{query}}, W^{(h)}_{\mathrm{key}}, W^{(h)}_{\mathrm{value}} \in \RR^{d'\times d}$ for all $h \in [H]$ and the parameters of the final token-wise MLP, all of which are unrelated to the number of tokens $n$.
Furthermore, the transformer architecture is permutation-equivariant.

\section{Optimization and Training Procedures}
\label{app:algorithm}
We use the augmented Lagrangian method to solve the constrained training problem in \cref{eq:empirical_prob} over the space of neural network parameters $w \in \RR^{d_w}$.  
We define the Lagrangian function for the optimization problem augmented with a quadratic penalty term for violating the constraints as mentioned in \cref{eq:loss}.
\begin{equation*}
\begin{aligned}
    \mathcal{L}_{\rho}(w;\lambda) = &-\frac{1}{L}\sum_{\ell=1}^L \sum_{i=1}^n p_i^w(V^{(\ell)}, {x}^{(\ell)}, {y}^{(\ell)})+
    &\sum_{i=1}^n\lambda_i\widehat{rgt}_i(w)+\frac{\rho}{2}\sum_{i=1}^n\left(\widehat{rgt}_i(w)\right)^2
\end{aligned}
\end{equation*}

\begin{algorithm}[t]
\caption{\name~Training}
\label{alg:training}
\begin{algorithmic}[1]
\STATE {\bfseries Input:} Minibatches $\mathcal{S}_1, \ldots, \mathcal{S}_T$ of size $B$%
\STATE {\bfseries Parameters:} 
$\forall t \in [T], \rho_t>0$, $\gamma>0$, $\eta>0$, $c>0$, $T \in \NN$, $\Gamma \in \NN$, $T_\lambda \in \NN$
\STATE {\bfseries Initialize:} $w^0 \in \RR^d$, $\lambda^0 \in \RR^n$ %
\FOR{$t~=~0$ \textbf{to} $T$}
\STATE Receive minibatch $\mathcal{S}_t \,=\, \{(v^{(1)}, x^{(1)}, y^{(1)}), \ldots, (v^{(B)}, x^{(B)}, y^{(B)})\}$
\STATE Initialize misreports ${v'}_{i}^{(\ell)}\in \Vv_i, \forall \ell \in [B],~ i \in N$
\FOR{$r~=~0$ \textbf{to} $\Gamma$}
\STATE ~~$\forall \ell \in [B],~ i \in N:$
\STATE ~~~~\quad${v'}_{i}^{(\ell)} \leftarrow {v'}_{i}^{(\ell)} +\gamma\nabla_{v'_i}\,u^w_i\big(v^{(\ell)}_i, \big({v'}_{i}^{(\ell)}, v^{(\ell)}_{-i}\big),{{x^{(\ell)}}},{{y^{(\ell)}}}\big)$
\ENDFOR
\STATE Compute regret gradient: 
\STATE ~~\quad$\forall \ell \in [B], i \in N$:
\STATE ~~\quad~~\quad$g^t_{\ell, i} = \nabla_w\left[u^w_i(v^{(\ell)}_i,({v'}_{i}^{(\ell)},v^{(\ell)}_{-i}),{{x^{(\ell)}}},{{y^{(\ell)}}})-u^w_i(v^{(\ell)}_i, v^{(\ell)},{{x^{(\ell)}}},{{y^{(\ell)}}})\right]\Big\vert_{w=w^t}$
\STATE Compute Lagrangian gradient using \cref{eq:L-grad} and update $w^t$:
\STATE ~~\quad$w^{t+1} \leftarrow w^t \,-\, \eta\nabla_w\, \Ll_{\rho_t}	(w^{t}, \lambda^{t})$

\STATE Update Lagrange multipliers $\lambda$ once in $T_\lambda$ iterations:
\STATE ~~\quad\textbf{if} {$t$ is a multiple of $T_\lambda$} \textbf{then}
\vskip 2pt
\STATE ~~\quad~~\quad$\lambda^{t+1}_i \leftarrow \lambda_i^{t} + \rho_t\,\widehat{\mathit{rgt}}_i(w^{t+1}), ~~\forall i \in N$
\STATE ~~\quad\textbf{else}  
\STATE ~~\quad~~\quad$\lambda^{t+1} \leftarrow \lambda^t$
\ENDFOR
\end{algorithmic}
\end{algorithm}

\cref{alg:training} describe the training procedure of \name.
First, for each iteration $t \in [T]$, we randomly draw a minibatch $\Ss_t$ of size $B$, in which $\mathcal{S}_t \,=\, \{(v^{(1)}, x^{(1)}, y^{(1)}), \ldots, (v^{(B)}, x^{(B)}, y^{(B)})\}$.
Afterward, we alternately update the model parameters and the Lagrange multipliers:
\begin{enumerate}[label=(\alph*)]
    \item $w^{new}\in\arg\min_w \mathcal{L}_{\rho}(w^{old},\lambda^{old})$
    \item $\lambda_i^{new}=\lambda_i^{old}+\rho\cdot \widehat{rgt}_i(w^{new}), \forall i \in N$
\end{enumerate}

The update (a) is performed approximately using gradient descent.
The gradient of $\Ll_\rho$ w.r.t.\ $w$ for fixed $\lambda^t$ is given by:
\begin{equation}
\label{eq:L-grad}
\nabla_w \, \Ll_\rho(w, \lambda^{t}) = -\frac{1}{B}\sum_{\ell=1}^B \sum_{i\in N} \nabla_w\, p^w_i(v^{(\ell)},x^{(\ell)},y^{(\ell)})
+\, \sum_{i\in N}\, \sum_{\ell = 1}^B \lambda^t_{i}\, g_{\ell, i}
  \,+\,\rho \sum_{i\in N} \, \sum_{\ell = 1}^B\, \widehat{\mathit{rgt}}_i(w)\, g_{\ell, i},
\end{equation}

where
\begin{align*}
g_{\ell, i} ~=~ \nabla_w\Big[\max_{{v'}_{i}^{(\ell)}\in \Vv_i} u^w_i(v^{(\ell)}_i, ({v'}_{i}^{(\ell)},v^{(\ell)}_{-i}),{{x^{(\ell)}}},{{y^{(\ell)}}})-u^w_i(v^{(\ell)}_i, v^{(\ell)},{{x^{(\ell)}}},{{y^{(\ell)}}})\Big].
\end{align*}

The computation of $\widehat{rgt}_i$ and $g_{\ell, i}$ involve a ``max'' over misreports for each bidder $i$, and we solve it approximately by gradient ascent.
In particular, we maintain misreports ${v'}_{i}^{(\ell)}$ for each bidder $i$ on each sample $\ell$. 
For every update on the model parameters $w^t$, we perform $\Gamma$ gradient ascent updates to compute the optimal misreports.

\section{Implementation Details}
\label{app:implement}
For all the settings (Setting \ref{setting:3(5)x1}-\ref{setting:5x10_10d}), we generate each training set with size in $\{50000, 100000, 200000\}$ and test set of size $5000$.

For all the methods, we train the models for a maximum of $80$ epochs with batch size $500$. 
We set the embedding size in settings with discrete context (Setting \ref{setting:3(5)x1}, \ref{setting:3(5)x1(2)}, \ref{setting:2(10)x5(10)}, \ref{setting:3(10)x10(10)}, \ref{setting:5(10)x10(10)}) as $16$.
The value of $\rho$ in the augmented Lagrangian (\cref{eq:loss}) was set as $1.0$ at the beginning and incremented by $5$ every two epochs.
The value of $\lambda$ in \cref{eq:loss} was set as $5.0$ initially and incremented every certain number (selected from $\{2-10\}$) of epochs.
All the models and regret are optimized through Adam~\citep{kingma2014adam} optimizer.
Following \citet{dutting2019optimal}, for each update on model parameters, we run $\Gamma=25$ update steps on the misreport bid $b_i$ for each bidder, and the optimized misreports are cached to initialize the misreports bidding in the next epoch.

For our proposed \name, the output channel of the first $1\times 1$ convolution in both the input layer and interaction layers are set to $64$.
We set $d=64$ for the $1\times 1$ convolution with residual connection in input layer, and $d_h=64$ for the final $1\times 1$ convolution in each interaction layer.
We tune the numbers of interaction layers from $\{2, 3\}$, and in each interaction layer we adopt transformer with $4$ heads and $64$ hidden nodes.

\mtt{RegretNet} and \mtt{CIRegretNet} take fully-connected neural networks as the core architecture.
We choose the number of layers from $\{3, 4, 5, 6, 7\}$ and the number of hidden nodes per layer from $\{64, 128, 256\}$.
As for \mtt{EquivariantNet} and \mtt{CIEquivariantNet}, we use $4$ exchangeable matrix layers of $64$ channels each. 

\section{Additional Out-of-Setting Generalization Experiments}
\label{app:exp:oos}
\begin{table*}[t]
\label{subtab:generalization:n}
    \centering
\caption{
Out-of-setting generalization results of \name and \mtt{CIEquivariantNet}: we train each model and evaluate it on the same contextual auction with a different number of bidders or items.
(a) Trained on Setting  \ref{setting:3(10)x10(10)} ($3\times 10$ with $|\Xx|=|\Yy|=10$) and evaluated with different number of bidders. 
(b) Trained on Setting \ref{setting:2(10)x5(10)} ($2\times 5$ with $|\Xx|=|\Yy|=10$) or Setting \ref{setting:2x5_10d} ($2\times 5$ with $|\Xx|\subset \RR^{10},|\Yy|\subset \RR^{10}$) and evaluated with different number of items.
}
    
\begin{subtable}[h]{\textwidth}
\label{subtab:generalization:n1}
    \centering
    \subcaption{}
    \resizebox{\textwidth}{!}{
    \begin{tabularx}{1.15\textwidth}{lcccccccccc}
        \toprule
        \multirow{2}{*}{Method} 
        & \multicolumn{2}{c}{$3\times 10$}
        & \multicolumn{2}{c}{$4\times 10$} 
        & \multicolumn{2}{c}{$5\times 10$}
        & \multicolumn{2}{c}{$6\times 10$}
        & \multicolumn{2}{c}{$7\times 10$}
        \\
        & $rev$ & $rgt$ & $rev$ & $rgt$ & $rev$ & $rgt$ & $rev$ & $rgt$ & $rev$ & $rgt$ \\
        \midrule \midrule
        \multicolumn{5}{l}{Trained on Setting \ref{setting:3(10)x10(10)}: $n=3, m=10$ with $|\Xx| = |\Yy| = 10$}
        \\
        \midrule
        \mtt{Item\text{-}wise\;Myerson} & 
        6.509 & - & 7.028 & - & 7.376 & - & \textbf{7.629} & - & \textbf{7.837} & -
        \\
        \mtt{CIEquivariantNet} & 
        6.703 & $<$0.001 & 7.024 & 0.018 & 7.229 & 0.051 & 7.365 & 0.079& 7.474&0.1
        \\
        {\name} & 
        \textbf{6.872} & $<$0.001 & \textbf{7.222} & $<$0.001 & \textbf{7.395} & $<$0.001 & 7.496 & $<$0.001 & 7.598 &  $<$0.001
        \\
        \bottomrule 
    \end{tabularx}
    }
\end{subtable}

\vskip 0.1in

\begin{subtable}[h]{\textwidth}
\label{subtab:generalization:n2}
    \centering
    \subcaption{}
    \resizebox{\textwidth}{!}{
    \begin{tabularx}{1.15\textwidth}{lcccccccccc}
        \toprule
        \multirow{2}{*}{Method} 
        & \multicolumn{2}{c}{$2\times 3$} 
        & \multicolumn{2}{c}{$2\times 4$}
        & \multicolumn{2}{c}{$2\times 5$}
        & \multicolumn{2}{c}{$2\times 6$} 
        & \multicolumn{2}{c}{$2\times 7$}
        \\
        & $rev$ & $rgt$ & $rev$ & $rgt$ & $rev$ & $rgt$ & $rev$ & $rgt$ & $rev$ & $rgt$ \\
        \midrule \midrule
        \multicolumn{11}{l}{Trained on Setting \ref{setting:2(10)x5(10)}: $n=2, m=5$ with $|\Xx| = |\Yy| = 10$}
        \\
        \midrule
        \mtt{Item\text{-}wise\;Myerson} & 1.691 & - & 2.264 & - & 2.821 & - & 3.391 & - & 3.954 & -
        \\
        \mtt{CIEquivariantNet} & 
        1.687 & $<$0.001 & 2.267 & $<$0.001 & 2.841 & $<$0.001 & 3.405 & $<$0.001& 3.971&$<$0.001
        \\
        {\name} & \textbf{1.720} & $<$0.001 & \textbf{2.333} & $<$0.001 & \textbf{2.916} & $<$0.001 & \textbf{3.540} & $<$0.001 & \textbf{4.141} & $<$0.001 
        \\
        \midrule
        \multicolumn{11}{l}{Trained on Setting \ref{setting:2x5_10d}: $n=2, m=5$ with $\Xx,\Yy\subset \RR^{10}$}
        \\
        \midrule
        \mtt{Item\text{-}wise\;Myerson} & {0.640} & - & 0.855 & - & 1.071 & - & 1.290 & - & 1.489 & -
        \\
        \mtt{CIEquivariantNet} & 
        0.663 & $<$0.001 & 0.900 & $<$0.001 & 1.147 & $<$0.001 & 1.400 & $<$0.001& 1.637&$<$0.001
        \\
        {\name} & \textbf{0.677} & $<$0.001 & \textbf{0.919} & $<$0.001 & \textbf{1.177} & $<$0.001 & \textbf{1.438} & $<$0.001 & \textbf{1.686} & $<$0.001 
        \\
        \bottomrule 
    \end{tabularx}
    }
\end{subtable}

    \label{tab:generalization}
\end{table*}

In addition to \name, we also conduct the same out-of-setting generalization experiments for \mtt{CIEquivariantNet}. The numerical results are shown in \cref{tab:generalization}. 
While \name~generalize well to all of these settings with low regret (less than $0.001$), \mtt{CIEquivariantNet} fails to obtain low regret when generalizing to auctions with a different number of bidders. 
Such a result indicates the critical role of the transformer-based interaction layers in \name~when generalized to settings with varying bidders.

\section{Proof of \cref{thm:UC}}
\label{app:proof:thm:UC}

The proof is done by combining covering numbers~\citep{shalev2014understanding, dutting2019optimal} and a generalization Lemma (\cref{lemma:generalization}, whose technique comes from \citet{duan2021pac}) based on concentration inequality.

\subsection{Basic Definition}
On top of the definitions in \cref{subsec:theory}, we first define the covering numbers of bidder's utility functions and regret functions. 

\paragraph{Covering Numbers $\Nn_{\infty, 1}(\Uu, r)$ and $\Nn_{\infty}(\Uu_i, r)$} Let $\mathcal{U}_i$ be the class of utility functions for bidder $i$ on auctions in $\Mm$, i.e.,
\begin{equation*}
\mathcal{U}_i =
\Big\{
 u_i: \Vv_i \times \Vv \times\Xx^n\times {\Yy}^m\rightarrow \RR \,\Big|\, 
 u_i({v}_i, v,{{x}},{{y}}) \,=\, \sum_{j=1}^m g_{ij}(v,{{x}},{{y}}) v_{ij} - p_i(v,{{x}},{{y}})
\Big\}.
\end{equation*}
Similarly, let $\mathcal{U}$ be the class of utility profiles over $\Mm$. 
Define the $\ell_{\infty, 1}$-distance between two utility profiles $u$ and $u'$ as $ \max_{v, v',x,y} \sum_{i=1}^n \vert u_i(v_i,(v'_i,v_{-i}),x,y) -u_i(v_i,(v'_i,v_{-i}),x,y)\vert$
and $\mathcal{N}_{\infty, 1}(\mathcal{U}, r)$ as the minimum number of balls of radius $r>0$ to cover $\mathcal{U}$ ($r$-covering number of $\Uu$) under such $\ell_{\infty, 1}$-distance.
We also define the $\ell_{\infty}$-distance between $u_i$ and $u'_i$ as $\max_{v, v'_i}\vert u_i(v_i,(v'_i, v_{-i}),x,y) - u'_i(v_i,(v'_i, v_{-i})x,y)\vert$ and $\mathcal{N}_\infty(\mathcal{U}_i, r)$ as the $r$-covering number of $\Uu_i$ under $\ell_{\infty}$-distance.

\paragraph{Covering Numbers $\Nn_{\infty, 1}(\mathrm{RGT}, r)$ and $\Nn_{\infty}(\mathrm{RGT}_i, r)$} As for regret functions, let $\mathrm{RGT}_i \circ \mathcal{U}_i$ be the class of all regret functions for bidder $i$, i.e.,
\begin{equation*}
\begin{aligned}
\mathrm{RGT}_i \circ \mathcal{U}_i &=
\Big\{
 {rgt}_i: \Vv \times \Xx^n\times {\Yy}^m\rightarrow \RR \,\Big|
 \\
 \quad&\, rgt_i(v,x,y) \,=\, 
\max_{{v}_i '\in \Vv_i} u_i({v}_i, ({v}_i ',v_{-i}),{{x}},{{y}})-u_i({v}_i, v,{{x}},{{y}})\text{ for some  }
u_i \in \mathcal{U}_i
\Big\}.
\end{aligned}
\end{equation*}
The same as before, we define $\mathrm{RGT}\circ \mathcal{U}$ as 
the class of profiles of regret functions, and we define $\ell_{\infty, 1}$-distance between two regret profiles $rgt$ and $rgt'$ as $\max_{v,x,y} \sum_{i=1}^n \vert rgt_i(v,x,y)-rgt'_i(v,x,y)\vert$.
Let $\Nn_{\infty, 1}(\mathrm{RGT}\circ\mathcal{U}, r)$ denote the $r$-covering number of $\mathrm{RGT}\circ \Uu$ under such distance. 
Similarly, define the $\ell_\infty$-distance between $rgt_i$ and $rgt'_i$ as $\max_{v,x,y}\vert rgt_i(v,x,y)-rgt'_i(v,x,y)\vert$, and denote $\mathcal{N}_\infty(\mathrm{RGT}\,\circ\,\mathcal{U}_i, r)$ as the $r$-covering number of $\mathrm{RGT}$.
 
\paragraph{Covering Numbers $\Nn_{\infty, 1}(\Pp, r)$ and $\Nn_{\infty}(\Pp_i, r)$}  
As for revenue (payment) functions, we denote the class of all the profiles of payment functions as $\mathcal{P}$ and 
\begin{equation*}
    \mathcal{P}_i \,=\, \{p_i: \Vv\times\Xx\times\Yy\rightarrow \RR_{\ge 0}\,|\,p \in  \Pp\}.
\end{equation*}
We denote the $r$-covering number of $\Pp$ as $\mathcal{N}_\infty(\Pp, r)$  under the $\ell_{\infty,1}$-distance and the  $r$-covering number for $\Pp_i$ as $\Nn_\infty(\Pp_i, \epsilon)$ under the $\ell_{\infty}$-distance.

\subsection{Important Lemmas}

The generalization lemma (\cref{lemma:generalization}) plays an important role in our proof.

\begin{lemma}
\label{lemma:generalization}
Let $\mathcal{S} \,=\, \{z_1,\ldots,z_L\} \in \Zz^L$ be a set of samples drawn i.i.d. from some distribution $\Dd$ over $\Zz$. 
We assume $f(z) \in [a, b]$ for all $f\in\mathcal{F}$ and $z\in\Zz$.  
Define the $\ell_\infty$-distance between two functions $f, f'\in \Ff$ as $\max_{z \in \Zz}|f(z) - f'(z)|$ and define $\Nn_{\infty}(\Ff, r)$ as the $r$-covering number of $\Ff$ under such $\ell_\infty$-distance.
Let $\LL_\Dd(f)=\EE_{z\sim D}[f(z)]$ and $\LL_S(f)=\frac{1}{|S|}\sum_{i=1}^{|S|} f(z_i)$, then we have
\begin{equation*}
\PP_{S\sim\Dd^m}\Big[\exists f\in\mathcal{F}, \Big|\LL_S(f)
    - \LL_\Dd(f)]\Big| > \epsilon \Big]\le 2\Nn_{\infty}(\mathcal{F}, \frac{\epsilon}{3})\exp\left(-\frac{2L\epsilon^2}{9(b-a)^2}\right).
\end{equation*}
\end{lemma}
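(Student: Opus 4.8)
The plan is to reduce the uniform deviation over the infinite class $\mathcal{F}$ to a finite union bound over a covering net, and then control each net element with Hoeffding's inequality. First I would fix a minimal proper $\frac{\epsilon}{3}$-cover $\mathcal{C} \subseteq \mathcal{F}$ of cardinality $\Nn_{\infty}(\Ff, \frac{\epsilon}{3})$ under the $\ell_\infty$-distance; by definition of the covering number, for every $f \in \mathcal{F}$ there is a representative $\hat{f} \in \mathcal{C}$ with $\max_{z \in \Zz}|f(z) - \hat{f}(z)| \le \frac{\epsilon}{3}$. Insisting on a proper cover (centers lying in $\mathcal{F}$) guarantees $\hat{f}(z) \in [a,b]$, which will matter for the Hoeffding step.

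The core step is a three-term decomposition that isolates the ``discretization slack'' from the statistical fluctuation. For any $f$ and its representative $\hat f$ I would write
\[
|\LL_S(f) - \LL_\Dd(f)| \le |\LL_S(f) - \LL_S(\hat{f})| + |\LL_S(\hat{f}) - \LL_\Dd(\hat{f})| + |\LL_\Dd(\hat{f}) - \LL_\Dd(f)|.
\]
Since $\LL_S$ and $\LL_\Dd$ are, respectively, the empirical and population averages of pointwise function values, the pointwise bound $\max_z |f(z)-\hat{f}(z)| \le \frac{\epsilon}{3}$ forces both the first and the third terms to be at most $\frac{\epsilon}{3}$. Consequently, the event $\{|\LL_S(f) - \LL_\Dd(f)| > \epsilon\}$ implies $\{|\LL_S(\hat{f}) - \LL_\Dd(\hat{f})| > \frac{\epsilon}{3}\}$ for the matching net element, so that
\[
\PP_{S\sim\Dd^L}\big[\exists f \in \mathcal{F}: |\LL_S(f) - \LL_\Dd(f)| > \epsilon\big] \le \PP_{S\sim\Dd^L}\big[\exists \hat{f} \in \mathcal{C}: |\LL_S(\hat{f}) - \LL_\Dd(\hat{f})| > \tfrac{\epsilon}{3}\big].
\]

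Finally I would apply a union bound over the finite cover $\mathcal{C}$ and Hoeffding's inequality to each fixed $\hat{f}$. Because $\hat{f}(z) \in [a,b]$, the terms $\hat{f}(z_1),\dots,\hat{f}(z_L)$ are i.i.d.\ bounded random variables with mean $\LL_\Dd(\hat{f})$ and empirical average $\LL_S(\hat{f})$, so Hoeffding with radius $\frac{\epsilon}{3}$ gives $\PP[|\LL_S(\hat{f}) - \LL_\Dd(\hat{f})| > \frac{\epsilon}{3}] \le 2\exp\!\big(-\frac{2L\epsilon^2}{9(b-a)^2}\big)$. Summing this over the $\Nn_{\infty}(\Ff, \frac{\epsilon}{3})$ elements of the cover yields exactly the claimed bound. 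There is no deep obstacle here: the only point requiring genuine care is the bookkeeping of the factor $\frac{\epsilon}{3}$, which must simultaneously absorb the two discretization slacks and serve as the Hoeffding deviation radius, and the substantive conceptual move is the passage from a supremum over the infinite class $\mathcal{F}$ to a finite union bound via the covering number.
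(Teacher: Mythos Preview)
Your proposal is correct and follows essentially the same argument as the paper: fix an $\epsilon/3$-cover, use the three-term triangle decomposition to absorb the two discretization slacks, reduce to a finite union bound over the net, and finish with Hoeffding at radius $\epsilon/3$. Your explicit insistence on a \emph{proper} cover (so that net elements inherit the range $[a,b]$ needed for Hoeffding) is a small technical refinement the paper leaves implicit, but otherwise the steps and constants match exactly.
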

\begin{proof}
Define $\Ff_r$ as the minimum function class that $r$-covers $\Ff$ (so that $|\Ff_r| = \Nn_{\infty}(\Ff, r)$).
For all function $f \in \Ff$, denote $f_r$ as the closed function to $f$ in such function class $\Ff_r$.
On top of that, we have $|f(z) - f_r(z)| \le r, \forall z \in \Zz$. 
For all $\epsilon > 0$, set $r = \frac{\epsilon}{3}$, we get
\begin{equation}
\label{eq:UC}
\begin{aligned}
    &\PP_{S\sim\Dd^m}\Big[\exists f\in\mathcal{F}, \Big|\LL_S(f)
    - \LL_\Dd(f)]\Big| > \epsilon \Big] 
    \\
    \le& \PP_{S\sim\Dd^m}\Big[ \exists f\in\mathcal{F}, \Big|\LL_S(f) - \LL_S({f}_r)\Big| 
     + \Big| \LL_S({f}_r) - \LL_\Dd({f}_r)\Big| 
        + \Big|\LL_\Dd({f}_r) - \LL_\Dd(f)]\Big| 
    > \epsilon \Big] 
    \\
    \le& \PP_{S\sim\Dd^m}\Big[\exists f\in\mathcal{F}, 
        r + \Big| \LL_S({f}_r) - \LL_\Dd({f}_r)\Big| + r > \epsilon \Big] 
    \\
    \le& \PP_{S\sim\Dd^m}\Big[\exists {f}_r \in {\mathcal{F}}_r, \Big|\LL_S({f}_r) - \LL_\Dd({f}_r) \Big| > \frac{1}{3}\epsilon \Big], \quad r = \frac{\epsilon}{3}
    \\
    \le& \Nn_{\infty}(\mathcal{F}, \frac{\epsilon}{3})\PP_{S\sim\Dd^m}\Big[\Big|\LL_S({f}) - \LL_\Dd({f}) \Big| > \frac{1}{3}\epsilon \Big]
    \\
    \overset{(a)}{\le}& 2\Nn_{\infty}(\mathcal{F}, \frac{\epsilon}{3})\exp\left(-\frac{2L\epsilon^2}{9(b-a)^2}\right),
\end{aligned}
\end{equation}
where $(a)$ holds by Hoeffding Inequality.
\end{proof}

The following two lemmas (\cref{cover1} and \cref{cover2}) provides the covering numbers bound for payment and regret.
\begin{lemma}
\label{cover1}
$\mathcal{N}_{\infty,1}(\mathcal{P}, \epsilon) \leq\mathcal{N}_{\infty,1}(\mathcal{M}, \epsilon)$.
\end{lemma}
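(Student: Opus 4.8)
The plan is to exhibit, from any minimal $\epsilon$-cover of $\Mm$, an $\epsilon$-cover of $\Pp$ of no larger cardinality by simply discarding the allocation coordinate. The one fact that makes this work is that the $\ell_{\infty,1}$-distance on $\Mm$ dominates the $\ell_{\infty,1}$-distance on $\Pp$. Recall the metric on $\Mm$ is $\dist_{\infty,1}\big((g,p),(g',p')\big)=\max_{v,x,y}\big(\sum_{i\in N,j\in M}|g_{ij}(v,x,y)-g'_{ij}(v,x,y)|+\sum_{i\in N}|p_i(v,x,y)-p'_i(v,x,y)|\big)$, while the metric on $\Pp$ is $\dist_{\infty,1}(p,p')=\max_{v,x,y}\sum_{i\in N}|p_i(v,x,y)-p'_i(v,x,y)|$. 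Since the allocation term $\sum_{i\in N,j\in M}|g_{ij}(v,x,y)-g'_{ij}(v,x,y)|$ is nonnegative pointwise, for every pair $(g,p),(g',p')\in\Mm$ we get $\dist_{\infty,1}(p,p')\le\dist_{\infty,1}\big((g,p),(g',p')\big)$.

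First I would fix a minimum-size $\epsilon$-cover $\Cc\subseteq\Mm$ under the $\ell_{\infty,1}$-distance, so that $|\Cc|=\Nn_{\infty,1}(\Mm,\epsilon)$, and define the projected family $\Cc_{\Pp}=\{\,p:(g,p)\in\Cc\,\}$, so that $|\Cc_{\Pp}|\le|\Cc|$. Then, given an arbitrary $p\in\Pp$, I would choose (by definition of $\Pp$) some $g$ with $(g,p)\in\Mm$, pick $(g',p')\in\Cc$ with $\dist_{\infty,1}\big((g,p),(g',p')\big)\le\epsilon$, and conclude from the displayed inequality that $p'\in\Cc_{\Pp}$ satisfies $\dist_{\infty,1}(p,p')\le\epsilon$. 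Hence $\Cc_{\Pp}$ is an $\epsilon$-cover of $\Pp$, which gives $\Nn_{\infty,1}(\Pp,\epsilon)\le|\Cc_{\Pp}|\le|\Cc|=\Nn_{\infty,1}(\Mm,\epsilon)$, as claimed.

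There is essentially no obstacle here; the only point requiring a (trivial) bit of care is that the center $p'$ of the projected ball is itself a legitimate element of $\Pp$ (it is, being the payment component of $(g',p')\in\Mm$) and that passing from a cover center $(g',p')$ to its projection $p'$ cannot increase distances, which is precisely the nonnegativity observation above. I would also remark that the identical projection argument applies verbatim to the per-bidder classes (and to the utility and regret classes), which is presumably how the companion lemma and the later covering-number bounds will be obtained.
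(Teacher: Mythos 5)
Your argument is correct and is essentially the same as the paper's proof: both take an $\epsilon$-cover of $\Mm$ under the $\ell_{\infty,1}$-distance, project each center $(\hat g,\hat p)$ onto its payment component $\hat p$, and use that discarding the nonnegative allocation term can only decrease the distance, so the projected family is an $\epsilon$-cover of $\Pp$ of no larger cardinality.
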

\begin{proof}
By the definition of the covering number for the auction class $\Mm$, there exists a cover $\hat{\Mm}$ for $\mathcal{M}$ of size $|\hat{\Mm}| \leq  \mathcal{N}_{\infty, 1}(\mathcal{M},\epsilon)$
such that for
 any $(g, p)\in \mathcal{M}$, there is a $(\hat{g}, \hat{p}) \in \hat{\Mm}$  for all $v,x,y$, 
\begin{equation*}
\sum_{i,j} \vert g_{ij}(v,x,y) - \hat{g}_{ij}(v,x,y)\vert + \sum_i\vert p_i(v,x,y) - \hat{p}_i(v,x,y)\vert \leq \epsilon.
\end{equation*}

As a result, we can have $\hat{\Pp}=\{\hat{p} \,\Big|\, (\hat{g},\hat{p})\in \hat M\}$, then for any $p\in\Pp$, there exist a $\hat{p}\in\hat{\Pp}$, for all $v,x,y$,
\begin{equation*}
\sum_i\vert p_i(v,x,y) - \hat{p}_i(v,x,y)\vert \leq \epsilon.
\end{equation*}

Therefore, we have $\mathcal{N}_{\infty, 1}(\mathcal{P}, \epsilon) \leq\mathcal{N}_{\infty, 1}(\mathcal{M}, \epsilon)$.
\end{proof}
\begin{lemma}
\label{cover2}
$\mathcal{N}_{\infty,1}(\mathrm{RGT}\circ\mathcal{U}, \epsilon)\leq\mathcal{N}_{\infty,1}(\mathcal{M}, \frac{\epsilon}{2n})$.
\end{lemma}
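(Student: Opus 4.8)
The plan is to show that the map sending a utility profile $u$ to its regret profile $rgt$ is Lipschitz with respect to the relevant $\ell_{\infty,1}$-distances, and then compose this with the covering-number bound for payments/allocations. Concretely, the first step is to relate $\Nn_{\infty,1}(\Uu, r)$ to $\Nn_{\infty,1}(\Mm, r')$: given two mechanisms $(g,p)$ and $(g',p')$ that are $\ell_{\infty,1}$-close, the induced utilities satisfy
\begin{equation}
|u_i(v_i,(v'_i,v_{-i}),x,y) - u'_i(v_i,(v'_i,v_{-i}),x,y)| \le \sum_{j} |g_{ij}-g'_{ij}|\cdot v_{ij} + |p_i - p'_i|,
\end{equation}
and since $v_{ij}\in[0,1]$ (by the normalization $v_i(S)\le 1$ assumed in \cref{thm:UC}), summing over $i$ gives that the utility profiles are within the same $\ell_{\infty,1}$-distance. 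Hence $\Nn_{\infty,1}(\Uu, r) \le \Nn_{\infty,1}(\Mm, r)$.

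The second step is the key Lipschitz estimate: if two utility profiles $u, u'$ are $\ell_{\infty,1}$-close with parameter $r$, I want to bound the $\ell_{\infty,1}$-distance of the associated regret profiles. Fix $v,x,y$ and a bidder $i$. Writing $rgt_i(v,x,y) = \max_{v'_i} u_i(v_i,(v'_i,v_{-i}),x,y) - u_i(v_i,v,x,y)$, I use the standard fact that $|\max_a f(a) - \max_a h(a)| \le \max_a |f(a)-h(a)|$. The truthful term contributes $|u_i(v_i,v,x,y) - u'_i(v_i,v,x,y)|$ and the $\max$ term contributes $\max_{v'_i}|u_i(v_i,(v'_i,v_{-i}),x,y) - u'_i(v_i,(v'_i,v_{-i}),x,y)|$; together $|rgt_i - rgt'_i| \le |u_i(v_i,v,x,y)-u'_i(v_i,v,x,y)| + \max_{v'_i}|u_i(v_i,(v'_i,v_{-i}),x,y)-u'_i(v_i,(v'_i,v_{-i}),x,y)|$. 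Now the delicate point: in the definition of the $\ell_{\infty,1}$-distance on utility profiles the sum is over $i$ at a \emph{single} misreport argument $(v'_i, v_{-i})$ shared across bidders, whereas here each bidder $i$ picks its own worst-case $v'_i$; so I cannot directly sum the two displayed terms over $i$ and call it one $\ell_{\infty,1}$-distance. Instead I bound \emph{each} of the two terms separately by $\max_{v,v',x,y}\sum_i |u_i(v_i,(v'_i,v_{-i}),x,y)-u'_i(\cdots)| = d_{\infty,1}(u,u')$ (the truthful term being the special case $v'_i = v_i$), which yields $\sum_i |rgt_i - rgt'_i| \le 2\, d_{\infty,1}(u,u')$. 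Therefore an $(\tfrac{\epsilon}{2})$-cover of $\Uu$ in $\ell_{\infty,1}$ induces an $\epsilon$-cover of $\mathrm{RGT}\circ\Uu$, i.e. $\Nn_{\infty,1}(\mathrm{RGT}\circ\Uu, \epsilon) \le \Nn_{\infty,1}(\Uu, \tfrac{\epsilon}{2})$.

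Combining the two steps: $\Nn_{\infty,1}(\mathrm{RGT}\circ\Uu, \epsilon) \le \Nn_{\infty,1}(\Uu, \tfrac{\epsilon}{2}) \le \Nn_{\infty,1}(\Mm, \tfrac{\epsilon}{2})$, which is slightly stronger than the claimed $\Nn_{\infty,1}(\Mm, \tfrac{\epsilon}{2n})$; the weaker $\tfrac{\epsilon}{2n}$ bound presumably comes from a more pessimistic route where one uses the per-bidder $\ell_\infty$-covers of $\Uu_i$ and unions over $n$ bidders, each requiring radius $\tfrac{\epsilon}{2n}$ to make the profile-level sum of $n$ regret differences at most $\epsilon$ after the factor-$2$ loss — I would present whichever bookkeeping the authors actually use, but the substantive content is the factor-$2$ Lipschitz bound for the $\max$-based regret functional plus the factor-$1$ passage from mechanisms to utilities. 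The main obstacle is the bookkeeping subtlety flagged above: reconciling the "shared misreport argument" in the definition of $\ell_{\infty,1}$-distance on profiles with the per-bidder worst-case misreports in the regret, so that the triangle-inequality/$\max$-difference estimate is applied at the right level of quantification.
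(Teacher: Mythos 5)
Your second step (the regret functional is $2$-Lipschitz at the profile level) is correct and is exactly the paper's argument, including the correct handling of the per-bidder worst-case misreports via the decoupling $\sum_i \max_{v'_i}|\cdot| = \max_{v'}\sum_i|\cdot|$. The genuine gap is in your first step, the claim $\Nn_{\infty,1}(\Uu,r)\le\Nn_{\infty,1}(\Mm,r)$. The very quantifier subtlety you flag for the regret step already bites here: in the $\ell_{\infty,1}$-distance on utility profiles, bidder $i$'s utility is evaluated at her \emph{own} deviation profile $(v'_i,v_{-i})$, so the $n$ terms in $\sum_i |u_i(v_i,(v'_i,v_{-i}),x,y)-\hat u_i(v_i,(v'_i,v_{-i}),x,y)|$ involve $n$ \emph{different} bid profiles. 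The mechanism-level $\ell_{\infty,1}$ bound controls $\sum_{i,j}|g_{ij}-\hat g_{ij}|+\sum_i|p_i-\hat p_i|$ only at a \emph{single} bid profile; applied at bid profile $(v'_i,v_{-i})$ it yields a bound of the full radius $r$ for the single summand indexed by $i$, and summing over bidders costs a factor $n$ (the discrepancy of size $r$ could be concentrated on a different bidder at each of the $n$ deviation profiles). So the argument you give only establishes $\Nn_{\infty,1}(\Uu,\epsilon)\le\Nn_{\infty,1}(\Mm,\frac{\epsilon}{n})$, not $\Nn_{\infty,1}(\Uu,\epsilon)\le\Nn_{\infty,1}(\Mm,\epsilon)$, and your concluding bound $\Nn_{\infty,1}(\mathrm{RGT}\circ\Uu,\epsilon)\le\Nn_{\infty,1}(\Mm,\frac{\epsilon}{2})$ is unjustified.

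Your speculation about where the paper's $\frac{\epsilon}{2n}$ comes from is therefore off: the paper does \emph{not} take a per-bidder $\ell_\infty$-cover-plus-union route. It proves exactly your factor-$2$ step, $\Nn_{\infty,1}(\mathrm{RGT}\circ\Uu,\epsilon)\le\Nn_{\infty,1}(\Uu,\frac{\epsilon}{2})$, and then proves $\Nn_{\infty,1}(\Uu,\epsilon)\le\Nn_{\infty,1}(\Mm,\frac{\epsilon}{n})$ by starting from an $\frac{\epsilon}{n}$-cover of $\Mm$, bounding each bidder's utility difference by $\frac{\epsilon}{n}$ (uniformly over bid profiles, which is what handles the per-bidder deviation arguments), and summing over the $n$ bidders to get $\epsilon$. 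The factor $n$ is thus the substantive price of passing from the joint mechanism metric at one bid profile to a utility-profile metric that evaluates each bidder at her own deviation; to repair your write-up, replace the claimed factor-$1$ passage by this factor-$n$ passage, which recovers precisely the lemma's $\Nn_{\infty,1}(\Mm,\frac{\epsilon}{2n})$.
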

\begin{proof}
The proof then proceeds in two steps:
\begin{enumerate}
    \item 
bounding the covering number for each regret class $\mathrm{RGT}\circ\mathcal{U}$ in terms of the covering number for individual utility classes $\mathcal{U}$;

    \item 
bounding the covering number for the joint utility class $\mathcal{U}$ in terms of the covering number for $\Mm$.

\end{enumerate}

First we prove that $\mathcal{N}_{\infty,1}(\mathrm{RGT}\circ\mathcal{U}, \epsilon) \leq \mathcal{N}_{\infty,1}(\mathcal{U}, \frac{\epsilon}{2})$.

By the definition of covering number $\mathcal{N}_{\infty,1}(\mathcal{U}, r)$, 
there exists  a cover $\hat{\Uu}$ with size at most $\Nn_{\infty,1}(\mathcal{U}, \epsilon/2)$ such that
for any $u\in \mathcal{U}$, there is a $\hat{u} \in \hat{\mathcal{U}}$ with
\begin{equation*}
\max_{v, v',x,y} \sum_{i=1}^{n}\Big\vert u_i(v_i, (v'_i, v_{-i}),x,y) - \hat{u}_i(v_i, (v'_i, v_{-i}),x,y)\Big\vert\leq \frac{\epsilon}{2}.
\end{equation*}

For any $u \in \Uu$, taking $\hat{u}\in\hat{\mathcal{U}}$ satisfying the above condition, then for any $v,x,y$, we have
\begin{equation*}
\begin{aligned}
    &\Big\vert\max_{v'_i \in \Vv_i}\big(u_i(v_i, (v'_i, v_{-i}),x,y) - u_i(v_i, (v_i, v_{-i}),x,y)\big) - \max_{\bar{v}_i\in \Vv_i}\big(\hat{u}_i(v_i, (\bar{v}_i, v_{-i}),x,y) - \hat{u}_i(v_i, (v_i, v_{-i}),x,y)\big)\Big\vert\\
\leq~& \Big\vert\max_{v'_i}u_i(v_i, (v'_i, v_{-i}),x,y) - \max_{\bar{v}_i}\hat{u}_i(v_i, (\bar{v}_i, v_{-i}),x,y) + \hat{u}_i(v_i, (v_i, v_{-i}),x,y) - u_i(v_i, (v_i, v_{-i}),x,y)\Big\vert\\
\leq~&\left\vert\max_{v'_i}u_i(v_i, (v'_i, v_{-i}),x,y) - \max_{\bar{v}_i}\hat{u}_i(v_i, (\bar{v}_i, v_{-i}),x,y)\right\vert + \Big\vert \hat{u}_i(v_i, (v_i, v_{-i}),x,y) - u_i(v_i, (v_i, v_{-i}),x,y)\Big\vert\\
\leq~&\left\vert\max_{v'_i}u_i(v_i, (v'_i, v_{-i}),x,y) - \max_{\bar{v}_i}\hat{u}_i(v_i, (\bar{v}_i, v_{-i}),x,y)\right\vert +  \max_{v'_i}\Big\vert u_i(v_i, (v'_i, v_{-i}),x,y) - \hat{u}_i(v_i, (v'_i, v_{-i}),x,y)\Big\vert.
\end{aligned}
\end{equation*}

Let $v^*_i \in \arg\max_{v'_i} u_i(v_i, (v'_i, v_{-i}),x,y)$ and $\hat{v}^*_i \in \arg\max_{\bar{v}_i} \hat{u}_i(v_i, (\bar{v}_i, v_{-i}),x,y)$, then
\begin{equation*}
\begin{aligned}
\max_{v'_i}u_i(v_i, (v'_i, v_{-i}),x,y) -\max_{\bar{v}_i}\hat{u}_i(v_i, (\bar{v}_i, v_{-i}),x,y)
=& u_i(v_i,(v^*_i, v_{-i}),x,y)- \hat{u}_i(v_i, (\hat{v}^*_i, v_{-i}),x,y)
\\
\leq&  u_i(v_i,(v^*_i, v_{-i}),x,y)- \hat{u}_i(v_i, (v^*_i, v_{-i}),x,y)
\\
\leq& \max_{v'_i} \Big\vert u_i(v_i, (v'_i, v_{-i}),x,y) - \hat{u}_i(v_i, (v'_i, v_{-i}),x,y)\Big\vert
\\
\max_{\bar{v}_i}\hat{u}_i(v_i, (\bar{v}_i, v_{-i}),x,y) -\max_{v'_i}u_i(v_i, (v'_i, v_{-i}),x,y)
=& \hat{u}_i(v_i, (\hat{v}^*_i, v_{-i}),x,y)- u_i(v_i,(v^*_i, v_{-i}),x,y)
\\
\leq& \hat{u}_i(v_i, (\hat{v}^*_i, v_{-i}),x,y)- u_i(v_i,(\hat{v}^*_i, v_{-i}),x,y)
\\
\leq& \max_{v'_i} \Big\vert u_i(v_i, (v'_i, v_{-i}),x,y) - \hat{u}_i(v_i, (v'_i, v_{-i}),x,y)\Big\vert.
\end{aligned}
\end{equation*}

Thus, 
\begin{equation*}
\begin{aligned}
&\Big\vert\max_{v'_i}\big(u_i(v_i, (v'_i, v_{-i})) - u_i(v_i, (v_i, v_{-i}))\big) - \max_{\bar{v}_i}\big(\hat{u}_i(v_i, (\bar{v}_i, v_{-i})) - \hat{u}_i(v_i, (v_i, v_{-i}))\big)\Big\vert \\&
\leq 2\max_{v'_i}\Big\vert u_i(v_i, (v'_i, v_{-i}),x,y) - \hat{u}_i(v_i, (v'_i, v_{-i}),x,y)\Big\vert.
\end{aligned}
\end{equation*}
Summing the inequalities by $i$, this completes the proof that $\mathcal{N}_{\infty,1}(\mathrm{RGT}\circ \mathcal{U}, \epsilon) \leq \mathcal{N}_{\infty,1}(\mathcal{U}, \frac{\epsilon}{2})$.

Next we prove that $\mathcal{N}_{\infty,1}(\mathcal{U}, \epsilon)\leq \mathcal{N}_{\infty,1}(\mathcal{M}, \frac{\epsilon}{n})$.

By the definition of the covering number for the auction class $\Mm$, there exists a cover $\hat{\Mm}$ for $\mathcal{M}$ of size $|\hat{\Mm}| \leq  \mathcal{N}_{\infty, 1}(\mathcal{M},\frac{\epsilon}{n})$
such that for
 any $(g, p)\in \mathcal{M}$, there is a $(\hat{g}, \hat{p}) \in \hat{\Mm}$  for all $v,x,y$, 
\begin{equation*}
\sum_{i,j} \vert g_{ij}(v,x,y) - \hat{g}_{ij}(v,x,y)\vert + \sum_i\vert p_i(v,x,y) - \hat{p}_i(v,x,y)\vert \leq \frac{\epsilon}{n}.
\end{equation*}
For all $v \in \Vv, v'_i \in \Vv_i$,
\begin{equation*}
\begin{aligned}
&\Big\vert u_i(v_i, (v'_i, v_{-i}),x,y) - \hat{u}_i(v_i, (v'_i, v_{-i}),x,y)\Big\vert
\\
\leq& \Big\vert \sum_{j} \big(g_{ij}((v'_i, v_{-i}),x,y) - \hat{g}_{ij}((v'_i, v_{-i}),x,y)\big) v'_{ij}\Big\vert + \Big\vert p_i((v'_i, v_{-i}),x,y) -\hat{p}_i((v'_i, v_{-i}),x,y) \Big\vert
\\
\leq& \sum_j \Big\vert g_{ij}((v'_i, v_{-i}),x,y) - \hat{g}_{ij}((v'_i, v_{-i}),x,y)\Big\vert + \Big\vert p_i((v'_i, v_{-i}),x,y) -\hat{p}_i((v'_i, v_{-i}),x,y) \Big\vert
\\
\leq&\frac{\epsilon}{n}.
\end{aligned}
\end{equation*}
Thus, 
\begin{equation*}
    \sum_{i=1}^{n}\left\vert u_i(v_i, (v'_i, v_{-i}),x,y) - \hat{u}_i(v_i, (v'_i, v_{-i}),x,y)\right\vert\leq n\cdot\frac{\epsilon}{n}=\epsilon.
\end{equation*}
This completes the proof that $\mathcal{N}_{\infty,1}(\mathcal{U}, \epsilon)\leq \mathcal{N}_{\infty,1}(\mathcal{M}, \frac{\epsilon}{n})$

Therefore,
\begin{equation*}
\begin{aligned}
\mathcal{N}_{\infty,1}(\mathrm{RGT}\circ \mathcal{U}, \epsilon) \leq \mathcal{N}_{\infty,1}(\mathcal{U}, \frac{\epsilon}{2})\leq\mathcal{N}_{\infty,1}(\mathcal{M}, \frac{\epsilon}{2n}).
\end{aligned}
\end{equation*}
This completes the proof of \cref{cover2}.
\end{proof}

\subsection{Proof of \cref{thm:UC}}
\begin{proof}[Proof of \cref{thm:UC}]
For all $\epsilon, \delta \in (0, 1)$, when 
\begin{equation*}
    L \ge \frac{9n^2}{2\epsilon^2}\left(\ln\frac{4}{\delta} + \ln{ \Nn_{\infty, 1}(\Mm,\frac{\epsilon}{6n})}\right),
\end{equation*}
Combining \cref{lemma:generalization} and \cref{cover1} together, we get

\begin{equation}
\label{eq:UC:profit_2}
\begin{aligned}
\PP_{S\sim\Dd^m}\Big[&\exists (g^w, p^w) \in \Mm, \bigg|\sum_{i=1}^n\EE_{(v,x,y) \sim\Dd_{v,{x},{y}}}[p^w_i(v, {x}, {y})] - \frac{1}{L}\sum_{i=1}^n \sum_{\ell=1}^L p^w_i(v^{(\ell)}, {x}^{(\ell)}, {y}^{(\ell)})\bigg| > \epsilon\Big]
\\
&\leq 2\Nn_\infty(\Pp,\frac{\epsilon}{3})\exp{(-\frac{2L\epsilon^2}{9n^2})}
\\
&\leq2\Nn_\infty(\Mm,\frac{\epsilon}{3})\exp{(-\frac{2L\epsilon^2}{9n^2})}
\\
&\leq2\Nn_\infty(\Mm,\frac{\epsilon}{6n})\exp{(-\frac{2L\epsilon^2}{9n^2})}
\\
&\le \frac{\delta}{2}.
\end{aligned}
\end{equation}

Similarly, combining \cref{lemma:generalization} and \cref{cover2} together, we have
\begin{equation}
\label{eq:UC:regret_2}
\begin{aligned}
\PP_{S\sim\Dd^m}\Big[&\exists (g^w, p^w) \in \Mm,  {\bigg|\EE_{(v,{x},{y}) \sim \Dd_{v,{x},{y}}}\Big[\sum_{i=1}^n rgt_i(w)\Big]  - 
\sum_{i=1}^n \widehat{rgt}_i(w)\bigg| > \epsilon\Big]}
\\
&\leq 2\Nn_\infty(\mathrm{RGT}\circ\mathcal{U},\frac{\epsilon}{3})\exp{(-\frac{2L\epsilon^2}{9n^2})}
\\
&\leq2\Nn_\infty(\Mm,\frac{\epsilon}{6n})\exp{(-\frac{2L\epsilon^2}{9n^2})}
\\
&\le \frac{\delta}{2}.
\end{aligned}
\end{equation}

Combining \cref{eq:UC:profit_2}, \cref{eq:UC:regret_2} and the Union Bound, with probability at most $\frac{\delta}{2}+\frac{\delta}{2}=\delta$, one of the two events of \cref{eq:UC:profit_2} and \cref{eq:UC:regret_2} happens.
Therefore, with probability at least $1 - \delta$, \cref{eq:UC:profit} and \cref{eq:UC:regret} both hold. We complete the proof of \cref{thm:UC}.

\end{proof}

\end{document}